\newtheorem{theorem}{Theorem}[section]
\newtheorem{corollary}[theorem]{Corollary}
\newtheorem{lemma}[theorem]{Lemma}
\newtheorem{conjecture}[theorem]{Conjecture}
\newcommand{\mn}{(m,n)}
\title{The Simple Chromatic Number of $(m,n)$-Mixed Graphs}
\date{}
\begin{document}

\maketitle
\begin{center}
	Christopher Duffy\footnote{Corresponding Author: christopher.duffy@usask.ca}\footnote{Research supported by Canada's National Science and Engineering Research Council} \quad \quad Jarrod Pas\\
	\vspace{.15in}
	\begin{small}
		Department of Mathematics and Statistics, University of Saskatchewan, CANADA\\
	\end{small}
\end{center}

\begin{abstract}
	An $(m,n)$-mixed graph generalizes the notions of oriented graphs and edge-coloured graphs to a graph object with $m$ arc types and $n$ edge types. 
	A simple colouring of such a graph is a non-trivial homomorphism to a reflexive target.
	We find that simple chromatic number of complete $(m,n)$-mixed graphs can be found in polynomial time.
	For planar graphs and $k$-trees ($k \geq 3$) we find that allowing the target to be reflexive does not lower the chromatic number of the respective family of $(m,n)$-mixed graphs.
	This implies that the search for universal targets for such families may be restricted to simple cliques.
\end{abstract}

Simple undirected graphs, oriented graphs and edge-coloured graphs may be generalized with a single class of graph objects, $\mn$-mixed graphs.
Such graphs consists of arcs (each of one of $m$ colors) and edges (each of one of $n$ colours).
A homomorphism of an $\mn$-mixed graph $G$ to a $\mn$-mixed graph $H$ is a vertex mapping that preserves arcs, edges and their colours.
This definition of homomorphism, as well as the subsequent definitions of  colouring and chromatic number, generalizes those definitions specific to simple undirected graphs, oriented graphs and edge-colored graphs.

In their work defining $\mn$-mixed graphs,  Ne\v{s}et\v{r}il  and Raspaud showed that a pair of similar results for oriented graphs and edge-coloured graphs were actually special cases of a more general result concerning $\mn$-mixed graphs \cite{NR00}. 
Here we continue this work. 
In particular we find that similar results in the study of reflexive homomorphism for oriented graphs in \cite{S02} and $2$-edge-coloured graphs in \cite{D15} can be generalized to the more general framework of $\mn$-mixed graphs.

A \textit{mixed graph} is a simple undirected graph in which a subset of the edges have been oriented to be arcs. 
An \textit{\mn-mixed graph}, $G = (V, A, E)$, with vertex set $V$, arc set $A$, and edge set $E$,  is a mixed graph 
together with a pair of functions: $c_A: A \to \{1,2,3, \dots, m\}$ and $c_E: E \to \{1,2,3, \dots, n\}$.
When $m = 0$ (resp. $n=0$) it is assumed that the mixed graph used to form the $(0,n)$-mixed graph (resp. $(m,0)$-mixed graph) contains no arcs (resp. edges). 
From this we see that a $(0,1)$-mixed graph is a simple undirected graph, a $(0,k)$-mixed graph is a $k$-edge-colored graph and a $(1,0)$-mixed graph is an oriented graph.

We observe that each $(m,n)$-mixed graph $G$ has an underlying simple undirected graph, which we denote by $U(G)$. 

In discussing arcs and edges of $(m,n)$-mixed graphs we make no distinction in notation between arcs and edges. 
Since each pair of adjacent vertices in $U(G)$ has exactly one adjacency in $G$,  there is no possibility for confusion in the notation $uv$ being used to refer to either an arc from $u$ to $v$ or an edge between $u$ and $v$, as the case may be. 
We say that $uv,xy \in E(U(G))$ have the \emph{same adjacency type} when
\begin{itemize}
	\item $uv,xy \in A$ and $c_A(uv) = c_A(xy) $ or
	\item $uv,xy \in E$ and $c_E(uv) = c_E(xy).$
\end{itemize}

We note that any particular $\mn$-mixed graph need not include all possible adjacency types. 
And so for any fixed $\mn$, we have that every $\mn$-mixed graph $G$ is also an $(m^\prime,n^\prime)$-mixed graph for any $m^\prime \geq m$ and  $n^\prime \geq n$. 

Let $G$ and $H$ be $\mn$-mixed graphs. 
A \emph{homomorphism} of $G$ to $H$ is a function $\phi : V(G) \rightarrow V(H)$ such that for all $uv \in E(U(G))$ we have
\begin{enumerate}
	\item $\phi(u) \neq \phi(v)$;
	\item if $uv \in E(G)$, then $\phi(u)\phi(v) \in E(H)$ and c$_{E(G)}(uv) = c_{E(H)}(\phi(u)\phi(v))$; and
	\item if $uv \in A(G)$, then $\phi(u)\phi(v) \in A(H)$ and c$_{A(G)}(uv) = c_{A(H)}(\phi(u)\phi(v))$.
\end{enumerate}
We call $\phi$ a homomorphism of $G$ to $H$.
Informally, a homomorphism of $G$ to $H$ is a vertex mapping that preserves arcs, edges and their colours.

We write $ \phi:G \rightarrow H$ when there exists a homomorphism $\phi$ of $G$ to $H$, or $G \rightarrow H$ when the name of the function is not important.
An $\mn$-mixed \emph{$k$-colouring} is a homomorphism of $G$ to a target on $k$ vertices.
The \textit{chromatic number} of $G$, denoted $\chi(G)$, is the least integer $t$ such that $G$ admits a homomorphism to a $(m,n)$-mixed graph on $t$ vertices.
For a family of $\mn$-mixed graphs $\mathcal{F}$, we let $\chi(\mathcal{F})$ denote the maximum of $\chi(G)$ taken over all $G \in \mathcal{F}$. If no such maximum exists we say $\chi(\mathcal{F}) = \infty$.

The study of homomorphisms of $(m,n)$-mixed graphs arises from the study of homomorphisms of oriented graphs and homomorphisms of $2$-edge-coloured graphs.
In \cite{NR00}, the authors introduce $(m,n)$-mixed graphs.
Their introduction is motivated by results in \cite{RASO94} for oriented graphs and in \cite{AM98} for $2$-edge-coloured graphs.
In \cite{RASO94}, the authors show that the chromatic number of an oriented graph is bounded above by the function of the acyclic chromatic number of the underlying graph.
In \cite{AM98}, the authors present the same result for $2$-edge-coloured graphs.
In \cite{NR00}, the authors show that these two results are in fact special cases of a more general result for $\mn$-mixed graphs.
%In fact, the authors of the later paper remarked that though the results had a similar flavour, they saw no one to prove one set of results from the other.

More recently in \cite{BDS17}, we have seen the unification of results for oriented graphs and $2$-edge-coloured graphs in the classification of cliques and some related computational complexity problems for $(m,n)$-mixed graphs.
We continue this trend here, and find results in the area of simple homomorphisms for $(m,n)$-mixed graphs that generalize known results for oriented graphs and $2$-edge-coloured graphs.

A \emph{simple homomorphism} of $G$ to $H$ is a function $\phi : V(G) \rightarrow V(H)$ such that either $|V(G)| = 1$, or
\begin{enumerate}
	\item there exists $x,y \in V(G)$ such that $\phi(x) \neq \phi(y)$;
	\item if $uv \in E(G)$, then $\phi(u)\phi(v) \in E(H)$ and c$_{E(G)}(uv) = c_{E(H)}(\phi(u)\phi(v))$; and
	\item if $uv \in A(G)$, then $\phi(u)\phi(v) \in A(H)$ and c$_{A(G)}(uv) = c_{A(H)}(\phi(u)\phi(v))$.
\end{enumerate}

We write $ \phi:G \rightarrow_s H$ when there exists a simple homomorphism, $\phi$, of $G$ to $H$, or $G \rightarrow_s H$ when the name of the function is not important. 
One may consider a simple homomorphism to be a non-trivial homomorphism to a target with a loop of each adjacency type at each vertex.
We define the \emph{simple chromatic number}, denoted $\chi_s$, analogously to the chromatic number.
We use the term \emph{simple $k$-colouring} to refer to a simple homomorphism of $G$ to an $\mn$-mixed graph with $k$ vertices.
We call an subgraph  $S$ of $G$ \emph{monochromatic} with respect to $\phi$  when each vertex of $S$ has the same image with under $\phi$.

When restricted to the class of oriented graphs (i.e., $(1,0)$-mixed graphs), simple homomorphism is akin to homomorphism to reflexive targets in which at least two vertices in the target have a pre-image.
In her PhD thesis, Smol\'inkov\'a introduced and studied various aspects of simple colourings of oriented graphs \cite{S02}. 
Notably, she showed $\chi_s(\mathcal{P}) = \chi(\mathcal{P})$ for $\mathcal{P}$, the family of orientations of planar graphs.
Various aspects of the study of homomorphism to reflexive digraphs  have been examined \cite{G08b, G08, M12}.
Simple colourings of edge-coloured graphs were considered for the first time in \cite{D15}.

Our work proceeds as follows. 
In Section \ref{sec:Complete} we fully characterize  minimum simple colourings of complete $\mn$-mixed graphs. 
Our results imply the existence of a polynomial-time algorithm to compute the simple chromatic number of  such $\mn$-mixed graphs.
We find constructions for $\mn$-mixed graphs on $k \geq 5$ vertices so that the resulting $\mn$-mixed graph has simple chromatic number exactly $k$.
In Section \ref{sec:minimum} we study  \emph{optimally simply colourable} families. 
That is, those families for which $\chi_s(\mathcal{F}) = \chi(\mathcal{F}$).
We generalize results given in \cite{D15} and \cite{S02}.
These results suggest that for such families, which include planar graphs, that the search for universal targets can be restricted to simple cliques, those $\mn$-mixed graph that have chromatic number equal to their order.

Herein we assume that all $\mn$-mixed graphs are simple.
That is, between every pair of vertices there is at most one adjacency, and that no vertex is adjacent to itself.
For all other graph theoretic definitions and notation we refer the reader to \cite{bondy}.

\section{The Simple Chromatic Number of $(m,n)$-mixed Graphs}\label{sec:Complete}

We begin by classifying those $\mn$-mixed graphs with chromatic number  $2$.\\

\begin{theorem} \label{kedge:2col}
	An $\mn$-mixed graph $G$ with $U(G) = \Gamma$ has $\chi_s(G) = 2$ if and only if there exists a partition $V(G) = X \cup Y$ so that for all $xy,x^\prime y^\prime \in E(\Gamma)$ with $x,x^\prime \in X$ and $y, y^\prime \in Y$ we have
	\begin{itemize}
		\item $xy, x^\prime y^\prime \in E(G)$ and $c_E(xy)= c_E(x^\prime,y^\prime)$;
		\item $xy, x^\prime y^\prime \in A(G)$ and $c_A(xy)= c_A(x^\prime,y^\prime)$; or
		\item $yx, y^\prime x^\prime \in A(G)$ and $c_E(yx)= c_E(y^\prime,x^\prime)$.
	\end{itemize}
\end{theorem}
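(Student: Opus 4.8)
The statement is an equivalence, so the plan is to argue each direction separately, with the central observation being that a simple homomorphism to a two-vertex target is controlled entirely by the adjacencies that cross between the two preimage classes. Since the target of a simple homomorphism carries a loop of every adjacency type at every vertex, any arc or edge of $G$ whose endpoints receive the same image is automatically preserved; only the cross adjacencies impose genuine constraints. I would first note that whenever $|V(G)| \geq 2$ the non-triviality clause forces $\chi_s(G) \geq 2$, so the content of the theorem lies in characterising when $\chi_s(G) \leq 2$.

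For the reverse direction, given a partition $V(G) = X \cup Y$ with both parts nonempty satisfying the stated condition, I would build a two-vertex target $H$ on $\{1,2\}$ equipped with every loop, and with the single permitted adjacency between $1$ and $2$ chosen to match the common type of the cross adjacencies: an edge of colour $i$, an arc $1 \to 2$ of colour $j$, or an arc $2 \to 1$ of colour $j$ (and no adjacency between $1$ and $2$ if $G$ has no cross edges at all). Mapping $X \mapsto 1$ and $Y \mapsto 2$ then yields a simple homomorphism: condition 1 holds because both parts are nonempty; adjacencies inside $X$ or inside $Y$ land on loops of the correct type; and each cross adjacency, being of the prescribed common type, maps correctly onto the adjacency between $1$ and $2$. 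Together with $\chi_s(G) \geq 2$ this gives $\chi_s(G) = 2$.

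For the forward direction, I would take a simple $2$-colouring $\phi : G \rightarrow_s H$ with $V(H) = \{a,b\}$, and set $X = \phi^{-1}(a)$ and $Y = \phi^{-1}(b)$, which are both nonempty by condition 1 and partition $V(G)$. Any two cross edges $xy, x^\prime y^\prime \in E(\Gamma)$ with $x,x^\prime \in X$ and $y,y^\prime \in Y$ are both mapped to the single adjacency between $a$ and $b$ in $H$; since every $\mn$-mixed graph has at most one adjacency between any pair of vertices, this common image forces $xy$ and $x^\prime y^\prime$ to share the same type, colour, and (for arcs) orientation relative to the partition, which is exactly one of the three listed alternatives.

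The routine part is the verification in each direction; the step demanding the most care is the bookkeeping in the forward direction, where I must read off the correct one of the three cases from the type of the lone $H$-adjacency and, for arcs, track which endpoint lies in $X$ versus $Y$ so that the orientation is recorded consistently (the third bullet refers to arcs oriented from $Y$ to $X$, and I would note that its colour constraint ought to be stated on $c_A$ rather than $c_E$). A secondary point to handle cleanly is the non-triviality requirement: I must insist that the partition be into two nonempty parts in the reverse direction, since an empty part would collapse the colouring to a single vertex and describe $\chi_s(G) = 1$ rather than $\chi_s(G) = 2$.
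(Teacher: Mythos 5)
Your proposal is correct and takes essentially the same approach as the paper: the paper's (much terser) proof simply observes that $\chi_s(G)=2$ means there is a simple homomorphism to a two-vertex target $H$ with $U(H)$ equal to $2K_1$ or $K_2$, and that the stated partition condition is equivalent to the existence of such a homomorphism, which is exactly the equivalence you verify in both directions. Your additional observations---that both parts of the partition must be nonempty for non-triviality, and that the third bullet's colour constraint should read $c_A$ rather than $c_E$---are correct and address genuine imprecisions in the paper's statement.
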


\begin{proof} 	
By definition  we have $\chi_s(G) = 2$ if and only if there exists $\phi:G \to_s H$ so that $U(H)$ is $2K_1$ or $K_2$.
Such a partition $V(G) = X \cup Y$ is equivalent to a homomorphism of $G$ to $H$.
\end{proof}

For an $\mn$-mixed graph $G$, let $E_i$ (respectively $A_j$) be the set of edges $e$ (respectively arcs) so that $c_E(e) = i$ (respectively, $c_A(e) = j$).
Let $G[E - E_i]$ (respectively $G[A - A_j]$) be the subgraph of $G$ formed by removing all edges of $E_i$ (respectively, all arcs of $A_j$) from $G$.\\

\begin{corollary}\label{cor:edgeCut}
	An $\mn$-mixed graph $G$  has $\chi_s(G) = 2$ if and only if there exists a minimal edge cut $E^\prime$ of $U(G)$ so that 
	\begin{itemize}
		\item there exists $1 \leq i \leq n$ so that $e \in E_i$ for all $e \in E^\prime$; or
		\item there exists $1 \leq j \leq m$ so that $uv \in A_j$ for all $uv \in E^\prime$ and there is a partition of the components of $U[G - E^\prime]$ in to two sets, $X$ and $Y$, so that every arc in $G$ between $x \in X$ and $y \in Y$ has its head at $y$.
	\end{itemize} 
\end{corollary}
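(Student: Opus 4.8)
The plan is to prove the corollary purely by translating between the bipartitions furnished by Theorem~\ref{kedge:2col} and \emph{minimal edge cuts} (bonds) of $U(G)$. Recall that Theorem~\ref{kedge:2col} says $\chi_s(G)=2$ exactly when $V(G)$ admits a bipartition $X\cup Y$ with both parts non-empty in which every adjacency crossing between $X$ and $Y$ has the same adjacency type, with a common orientation in the arc case. Throughout I will use the standard fact that in a connected graph a minimal edge cut is precisely a set $\partial(S)=[S,\overline{S}]$ for which both $U[S]$ and $U[\overline{S}]$ are connected, so that deleting it leaves exactly two components. The disconnected case is degenerate — one may place whole components on either side to realise the empty cut, or restrict to the single component that the cut meets — so I would carry out the argument assuming $U(G)$ is connected and dispose of disconnectedness in a remark.

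For the implication that a suitable cut forces $\chi_s(G)=2$, I would start from the minimal edge cut $E'$ and let $P,Q$ be the two components of $U[G-E']$, so that $E'=[P,Q]$. In the first alternative, setting $X=P$ and $Y=Q$ makes every crossing adjacency an element of $E'\subseteq E_i$, hence an edge of colour $i$, which is the first alternative of Theorem~\ref{kedge:2col}. In the second alternative I would take $(X,Y)$ to be the prescribed partition of the components; every crossing adjacency then lies in $E'\subseteq A_j$ and, by hypothesis, has its head in $Y$, matching the arc alternative of the theorem. In both cases both parts are non-empty, so the induced colouring genuinely uses two images and $\chi_s(G)=2$.

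For the converse I would invoke Theorem~\ref{kedge:2col} to obtain a bipartition $(X_0,Y_0)$, both parts non-empty, whose crossing set $C=[X_0,Y_0]$ is monochromatic and, in the arc case, consistently oriented from $X_0$ to $Y_0$. The key step is to extract from $C$ a minimal edge cut that respects this bipartition. I would do this by contracting each connected component of $U[X_0]$ and of $U[Y_0]$ to a single vertex, producing a connected bipartite graph $R$ on at least two vertices; $R$ then has a non-cut vertex, and the corresponding component $W$ of $U[X_0]$ or of $U[Y_0]$ satisfies that $U[V(G)\setminus W]$ is connected. Then $E'=\partial(W)$ is a minimal edge cut. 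Since $W$ is a full component of its side it sends no edge to the rest of that side, so $E'\subseteq C$, and deleting $E'$ leaves exactly the two components $W$ and $V(G)\setminus W$. With this cut in hand both bullets follow: in the edge case $E'\subseteq C\subseteq E_i$ gives the first alternative directly, while in the arc case $E'\subseteq C\subseteq A_j$, and because $W$ lies entirely within $X_0$ or entirely within $Y_0$, every arc of $E'$ has its head on one fixed side of the cut; naming that side $Y$ yields the required partition with every crossing arc headed into $Y$.

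I expect the main obstacle to be precisely this arc-alignment in the converse. If one instead chose a minimal edge cut inside $C$ arbitrarily, its two sides could each straddle $X_0$ and $Y_0$, so that the arcs of the cut point in both directions across the new partition and no orientation-consistent grouping exists; the contraction/non-cut-vertex device is exactly what forces the bond to be compatible with the original bipartition. The remaining checks are routine: that $\partial(W)\subseteq C$ (immediate from $W$ being a full component of its side), that both resulting parts are non-empty so the cut is a genuine non-trivial minimal edge cut, and that the head condition transcribes faithfully from the consistent orientation of $C$.
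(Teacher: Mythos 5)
Your proof is correct, and it follows the route the paper intends---deriving the corollary from Theorem \ref{kedge:2col}---but note that the paper prints no proof of this corollary at all: it treats the translation between bipartitions and minimal edge cuts as immediate. Your forward direction is exactly that immediate translation: the two sides of the bond (or the prescribed grouping of components) serve as the partition $X \cup Y$ demanded by Theorem \ref{kedge:2col}, since every crossing adjacency lies in $E'$ and hence shares a type and, in the arc case, an orientation. The substance you add is in the converse. You correctly observe that the crossing set $C = [X_0,Y_0]$ of the bipartition supplied by Theorem \ref{kedge:2col} is an edge cut but generally not a minimal one, and that an arbitrary bond chosen inside $C$ can have sides that each straddle $X_0$ and $Y_0$, so that its arcs point in both directions across it and neither bullet of the corollary applies to it. Your repair---contract the components of $U[X_0]$ and $U[Y_0]$ to obtain a connected bipartite graph $R$, pick a non-cut vertex of $R$, and set $E' = \partial(W)$ for the corresponding component $W$---is sound: $W$ and $V(G) \setminus W$ both induce connected subgraphs (connectivity of $R - w$ lifts back because each contracted piece is connected), so $E'$ is a bond; $E' \subseteq C$ because $W$ is a full component of its side; and since $W$ lies wholly inside $X_0$ or wholly inside $Y_0$, every arc of $E'$ has its head on one fixed side, giving the second bullet. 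This fills a genuine gap that the paper's unstated ``follows from Theorem \ref{kedge:2col}'' leaves open.

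One caveat, which concerns the statement rather than your argument: for disconnected $U(G)$ the corollary holds only under the convention that $\partial(S) = \emptyset$ (with $S$ a union of components) counts as a minimal edge cut. Under the stricter reading in which a minimal edge cut is a non-empty bond, the statement is false: the disjoint union of $K_1$ with a triangle whose three edges carry three distinct colours has $\chi_s = 2$, yet every bond consists of two edges of different colours, so neither bullet can be satisfied. Your remark about ``realising the empty cut'' is the right reading of the statement; it would be worth making that convention explicit rather than leaving it as an aside.
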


Corollary \ref{cor:edgeCut} trivially implies  a graph $G$ has $\chi_s(G) = 2$ if and only if  $G$ has an edge cut or $G$ is disconnected.
Theorem \ref{kedge:2col} generalizes the classification of oriented graphs with simple chromatic number $2$ found in \cite{G08b} and \cite{S02}, and the classification of $2$-edge-coloured graphs with simple chromatic number $2$ found in \cite{D15}.\\

\begin{corollary}\label{cor:poly2}
	For any $(m,n)$-mixed graph $G$, it can be decided in polynomial time if $\chi_s(G) = 2$.
\end{corollary}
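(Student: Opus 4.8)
The plan is to convert the structural characterization of Theorem \ref{kedge:2col} (equivalently Corollary \ref{cor:edgeCut}) into a direct search over adjacency types. By Theorem \ref{kedge:2col}, $\chi_s(G) = 2$ holds exactly when $V(G)$ admits a bipartition $X \cup Y$ all of whose crossing adjacencies share a single adjacency type: either every crossing adjacency is an edge of one common colour $i$, or every crossing adjacency is an arc of one common colour $j$ oriented in a common direction. Since an adjacency type absent from $G$ can only serve as the common type of a bipartition with \emph{no} crossing adjacency, I would first dispose of the case that $U(G)$ is disconnected — here the bipartition following the components certifies $\chi_s(G) = 2$, while $\chi_s(G) = 1$ only when $|V(G)| = 1$ — and thereafter assume $U(G)$ connected and loop only over the at most $|E(U(G))|$ adjacency types that actually occur.

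For a fixed edge colour $i$, every non-colour-$i$ adjacency must lie inside a single part, so each part is a union of connected components of $U(G[E - E_i])$, the graph obtained from $U(G)$ by deleting the colour-$i$ edges. Hence a suitable bipartition exists if and only if deleting the colour-$i$ edges disconnects $U(G)$, which is a single connectivity test. For a fixed arc colour $j$ the same reasoning forces each part to be a union of components of $U(G[A - A_j])$, but now orientation must be respected: every colour-$j$ arc joining two distinct components must run from the $X$-side to the $Y$-side. I would record this by contracting each such component to a single vertex, forming an auxiliary digraph $H_j$ whose arcs are the colour-$j$ arcs running between distinct components.

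A valid bipartition then corresponds to a labelling of $V(H_j)$ by $\{X, Y\}$ in which every arc of $H_j$ runs from an $X$-vertex to a $Y$-vertex. Since each such arc forces its tail to $X$ and its head to $Y$, a consistent labelling exists precisely when no component is simultaneously the tail of one colour-$j$ arc and the head of another, a condition decidable by one pass over the colour-$j$ arcs. When $U(G)$ is connected and the deletion disconnects it, the crossing arcs are exactly the deleted colour-$j$ arcs, so $H_j$ has at least one arc and both parts are automatically nonempty. Each of the (at most $|E(U(G))|$) iterations therefore performs a constant number of connected-component computations and linear scans, giving a running time polynomial in the size of $G$; we answer affirmatively if and only if some iteration succeeds.

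The only delicate point is the arc case: one must check not merely that the colour-$j$ arcs form an edge cut but that they admit a consistent orientation across that cut, and the step deserving the most care is verifying that the tail/head consistency condition on $H_j$ is \emph{exactly} equivalent to the orientation requirement of Corollary \ref{cor:edgeCut}. In particular, handling the already-disconnected case separately at the outset is what prevents a spurious orientation cycle lying entirely within one intended part from producing a false negative, so I would treat that reduction explicitly rather than folding it into the per-colour test.
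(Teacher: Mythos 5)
Your overall strategy -- reducing to Theorem~\ref{kedge:2col}, handling each edge colour by a single connectivity test, and contracting the components of $U(G[A - A_j])$ to form an auxiliary digraph $H_j$ -- is exactly the paper's, but the final step of the arc case is wrong. You require that \emph{every} arc of $H_j$ run from an $X$-vertex to a $Y$-vertex, and conclude that colour $j$ works precisely when no component is simultaneously the tail of one colour-$j$ arc and the head of another. However, the bipartition in Theorem~\ref{kedge:2col} only constrains the \emph{crossing} adjacencies: a colour-$j$ arc joining two distinct components that both lie in $X$ (or both in $Y$) is perfectly legal, since its endpoints receive the same colour and it is absorbed by a loop of the reflexive target. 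So the correct requirement on a labelling of $V(H_j)$ is only that no arc run from a $Y$-vertex to an $X$-vertex; arcs inside a part are unrestricted. Your condition is strictly stronger, and your algorithm returns false negatives. Concretely, let $G$ be the oriented path $a \to b \to c$ viewed as a $(1,0)$-mixed graph: here $H_1 \cong G$, the middle vertex is both a head and a tail, so your test rejects colour $1$ and outputs $\chi_s(G) > 2$; yet $X = \{a,b\}$, $Y = \{c\}$ is a valid bipartition (the only crossing adjacency is $b \to c$), so in fact $\chi_s(G) = 2$.

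The correct test -- the one the paper uses -- is whether $H_j$ fails to be strongly connected. If $H_j$ is not strongly connected, take as $Y$ a strongly connected component of $H_j$ with no outgoing arcs in the condensation and let $X$ be the rest; then every arc between the two parts runs from $X$ to $Y$, which is exactly what Corollary~\ref{cor:edgeCut} demands, while arcs with both ends in the same part do no harm. Conversely, if $H_j$ is strongly connected, every partition of its vertices into two nonempty sets has arcs in both directions, so colour $j$ cannot serve. Since strong connectivity is decidable in polynomial time, replacing your one-pass head/tail test with a strong-connectivity test on each $H_j$ repairs the argument; the rest of your proposal, including the separate treatment of a disconnected $U(G)$, then goes through. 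You flagged the equivalence between your tail/head condition and the orientation requirement of Corollary~\ref{cor:edgeCut} as the step deserving the most care, and indeed that equivalence is exactly what fails.
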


\begin{proof}
If $U(G)$ is not connected or $|V(G)| \leq 2$, then $\chi_s(G) \leq 2$.
And so we may assume $U(G)$ is connected and $G$ has at least $3$ vertices.
By Corollary \ref{cor:edgeCut} it suffices to decide if there exists a minimal edge cut $E^\prime$ of $U(G)$ whose removal leaves a pair of subgraphs $X$ and $Y$ so that for all $xy,x^\prime y^\prime \in E(\Gamma)$ with $x,x^\prime \in X$ and $y, y^\prime \in Y$ we have
\begin{itemize}
	\item $xy, x^\prime y^\prime \in E(G)$ and $c_E(xy)= c_E(x^\prime,y^\prime)$;
	\item $xy, x^\prime y^\prime \in A(G)$ and $c_A(xy)= c_A(x^\prime,y^\prime)$; or
	\item $yx, y^\prime x^\prime \in A(G)$ and $c_E(yx)= c_E(y^\prime,x^\prime)$.
\end{itemize}

For each of the $n$ edge types, this amounts to deciding whether the subgraph of $U(G)$ induced by removing all the edges of a particular colour is connected, which can be decided in polynomial time.

Assume  $G[E - E_i]$ is connected for all $1 \leq i \leq n$.
If $U(G[A - A_j])$ is connected ($1 \leq j \leq m$), then no subset of the arcs of colour $j$ can form $E^\prime$ in the statement of Corollary \ref{cor:edgeCut}.
And so $G$ does not admit a simple $2$-colouring.
Otherwise, consider the case where $U(G[A - A_j])$ is not connected, with components $\mathcal{C}_j = \{C_1, C_2, \dots, C_t\}$ for some $1 \leq j \leq m$.
Let $H_j$ be the digraph with vertex set $\mathcal{C}_j$, where there is the arc from $C_k$ to $C_\ell$ ($1 \leq k,\ell \leq t$) when there is an arc $c_kc_\ell \in A(G)$ for any $c_k \in C_k$ and $c_\ell \in C_\ell$.
By the construction of $H_j$, such an arc in $G$ must have $c_A(c_kc_\ell) = j$.

\emph{Claim: $\chi_s(G) > 2$ if and only if  $H_j$ is strongly connected for all $1 \leq j \leq m$}.\\
If $H_j$ is not strongly connected, then there exists a partition $X_j \cup Y_j$ of $\mathcal{C}_j$ so that all of the arcs,  between $X_j$ and $Y_j$ have their tail in $X_j$.
By construction, this set of arcs is a minimal edge cut in $U(G)$.
By Corollary \ref{cor:edgeCut}, we have $\chi_s(G) = 2$.\\

Assume $H_j$ is strongly connected for all $1 \leq j \leq m$.
Therefore for all partitions $X_j\cup Y_j$ of $V(H_j)$, there is an arc from $X_j$ to $Y_j$ and an arc from $Y_j$ to $X_j$.
Therefore there is no set of arcs with colour $j$ that can be used to satisfy Corollary \ref{cor:edgeCut}

The result follows from the claim by observing the that strong connectedness can be decided in polynomial time for digraphs.
\end{proof}

Let $G$ be a $\mn$-mixed graph. If $uv,wv \in E(U(G))$, then we say $u$ and $w$ \emph{agree} on $v$ when
\begin{itemize}
	\item $uv,vw \in E(G)$ and $c_E(uv) = c_E(vw)$; 
	\item $uv, wv \in A(G)$ and $c_A(uv) = c_A(wv)$; or
	\item $vu, vw \in A(G)$ and $c_A(vu) = c_A(vw)$.
\end{itemize}

Otherwise we say $v$ is \emph{between} $u$ and $w$.
Note that the definitions of \emph{agree} and \emph{between} follow the usual definitions for oriented graphs.

If $v$ is between $u$ and $w$,  we observe that if $\phi$ is a simple colouring of $G$ such that $\phi(u) = \phi(w)$, then $\phi(v) = \phi(u)$. 
Following \cite{S02}, we say $C \subset V(G)$ is \emph{convex} if for any pair $u,w \in C$ there is no $v \in V(G) - C$ such that $v$ is between $u$ and $w$.  
For $N \subseteq V(G)$, the \emph{convex hull} of $N$ is the smallest convex set of vertices of $G$ that has $N$ as a subset. 
It is easily verified that this set is well-defined.
We denote this set $conv(N)$.\\

\begin{lemma}\label{lem:ConvSame}
	Let $c$ be a simple  colouring of a $\mn$-mixed graph $G$ and consider $N \subseteq V(G)$ such that for all $u \in N$, $c(u) = i$. For all $x \in conv(N)$ we have $c(x) = i$.
\end{lemma}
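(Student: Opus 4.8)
The plan is to realize $conv(N)$ as the limit of an explicit closure operation and then run an induction that invokes the observation stated immediately before the definition of convexity (namely, that if $v$ is between $u$ and $w$ and a simple colouring $\phi$ satisfies $\phi(u)=\phi(w)$, then $\phi(v)=\phi(u)$).

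First I would make the closure precise. Set $N_0 = N$, and for $t \geq 0$ let $N_{t+1}$ consist of $N_t$ together with every vertex $v \notin N_t$ for which there exist distinct $u,w \in N_t$ with $v$ between $u$ and $w$. Since $V(G)$ is finite, the increasing chain $N_0 \subseteq N_1 \subseteq \cdots$ stabilizes at some set $\hat N$. I would then verify that $\hat N = conv(N)$. On one hand, $\hat N$ is convex: if some $v \notin \hat N$ were between two vertices of $\hat N$, the closure step would have added it, contradicting stability. On the other hand, $\hat N \supseteq N$, and any convex set $C \supseteq N$ satisfies $N_t \subseteq C$ for every $t$ by a short induction, since a convex $C$ containing $N_t$ must contain every vertex lying between two of its members; hence $C \supseteq \hat N$. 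Thus $\hat N$ is the smallest convex superset of $N$, which is exactly $conv(N)$.

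Next I would prove by induction on $t$ that $c(x) = i$ for every $x \in N_t$. The base case $t=0$ is the hypothesis on $N$. For the inductive step, take $v \in N_{t+1}\setminus N_t$; by construction $v$ is between some $u,w \in N_t$, and by the inductive hypothesis $c(u)=c(w)=i$. The cited observation then yields $c(v)=c(u)=i$, completing the step. Since $conv(N) = \hat N = N_t$ for all sufficiently large $t$, every vertex of $conv(N)$ receives colour $i$, as claimed.

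The only substantive content is the already-established observation, so the main thing to get right is the closure characterization of $conv(N)$ and, crucially, the maintenance of the colour invariant at every stage: the observation applies only because the two endpoints $u,w$ selected at step $t$ already share the colour $i$, which is precisely what the inductive hypothesis guarantees. I do not expect a genuine obstacle beyond this bookkeeping—finiteness of $V(G)$ secures termination of the closure, and the minimality argument ensures $\hat N$ neither overshoots nor undershoots the true convex hull.
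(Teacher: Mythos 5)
Your proof is correct, but it takes a genuinely different route from the paper's. You first build the closure chain $N_0 \subseteq N_1 \subseteq \cdots$ and verify that its limit $\hat N$ is exactly $conv(N)$; this is precisely the iterative construction $N^{X}$ that the paper only introduces later and validates in Theorem \ref{thm:ConvexHullIterative}. With that characterization in hand, your induction on the stage index $t$ is immediate: each vertex added at stage $t+1$ lies between two vertices already known to have colour $i$, so the observation stated just before the definition of convexity forces its colour as well. The paper's own proof of Lemma \ref{lem:ConvSame} never constructs a closure operator; instead it fixes a single $x \in conv(N)$ and inducts on the cardinality of a largest subset $N^\prime$ of $conv(N)$ that avoids $x$ and is closed under betweenness except at $x$, using maximality of $N^\prime$ to conclude that $x$ must be between two vertices of $N^\prime$. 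Your route costs some extra setup --- you must check that $\hat N$ is convex and is contained in every convex superset of $N$, which leans on finiteness of $V(G)$ --- but it buys a much more transparent colour-propagation argument and yields the paper's Theorem \ref{thm:ConvexHullIterative} as a byproduct; the paper's argument keeps the lemma self-contained, at the price of a more delicate maximality induction (which, as printed, also suffers from notational slips such as the undefined $c(u)$ and $c(v)$ in its final lines).
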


\begin{proof}
	Consider a vertex $x \in conv(N)$ and let $N^\prime$ be the largest subset of $conv(N)$ such that $x \notin N^\prime$ and for all $y,z \in N^\prime$ such that if there is a vertex $w \neq x$ between $y$ and $z$, then $w \in N^\prime$. We proceed by induction on the cardinality of $N^\prime$. If $|N^\prime| = 2$, then, since $N^\prime$ is largest, $x$ is between the two vertices in $N^\prime$ and so $c(x) = i$. 
	
	Assume now that $|N^\prime| = k > 2$. Since $N^\prime$ is largest, there exists a pair of vertices $y,z \in N^\prime$ such that $x$ is between $y$ and $z$. If $c$ is a simple colouring of $G$, then by induction $c(y) = c(z) = c(v)$ for all $v \in N^\prime$. Since $x$ is between $y$ and $z$, it must also be $c(x) = c(u)$.
\end{proof}

Though, in general, simple homomorphisms do not compose, surjective simple homomorphisms do compose.
We formalize this observation in the following lemma.\\

\begin{lemma} \label{lem:ComposeInject}
	If $\phi: G \to_s H$ and $\beta: H \to_s J$ are surjective, then $G \to_s J$.
\end{lemma}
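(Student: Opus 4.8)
The plan is to take the ordinary set-theoretic composite $\psi = \beta \circ \phi$ and check it against the three conditions defining a simple homomorphism $G \to_s J$. The key observation is that conditions (2) and (3) — preservation of edges, arcs and their colours — are purely local adjacency conditions and compose exactly as they do for ordinary homomorphisms: if $uv \in E(G)$, then $\phi(u)\phi(v) \in E(H)$ with $c_{E(G)}(uv) = c_{E(H)}(\phi(u)\phi(v))$ because $\phi$ is a simple homomorphism, and applying the same reasoning to $\beta$ at the adjacency $\phi(u)\phi(v)$ of $H$ yields $\psi(u)\psi(v) \in E(J)$ with the matching colour; the arc case is identical. Since $H$ and $J$ are simple, the mere fact that $\phi(u)\phi(v)$ and $\psi(u)\psi(v)$ are genuine adjacencies already forces $\phi(u)\neq\phi(v)$ and $\psi(u)\neq\psi(v)$. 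Thus the entire content of the lemma collapses onto verifying the non-triviality clause, condition (1) — which is precisely the part of the definition that fails to compose in general, and the reason the surjectivity hypothesis is needed at all. (Without it a non-constant $\phi$ could land entirely inside a single fibre of $\beta$, trivialising the composite even though both maps are individually non-trivial.)

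First I would record that $\psi$ is surjective onto $V(J)$, being a composite of surjections, and then split on $|V(J)|$. If $|V(J)| \geq 2$, I pick distinct $a,b \in V(J)$; surjectivity of $\psi$ produces $g_a, g_b \in V(G)$ with $\psi(g_a) = a \neq b = \psi(g_b)$, so condition (1) holds and $\psi$ is a simple homomorphism, completing this case immediately. The remaining, genuinely delicate case is $|V(J)| = 1$: here $\psi$ is constant, so condition (1) can never hold, and I must instead verify the escape clause $|V(G)| = 1$.

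The main obstacle is exactly this degenerate case, and I would dispatch it by pushing the single-vertex condition backwards through the two maps. If $|V(J)| = 1$ then $\beta$ is constant; since $\beta$ is a simple homomorphism whose non-triviality clause cannot be met by a constant map, it must be satisfied through the escape clause, forcing $|V(H)| = 1$. Repeating the argument one level down, $\phi$ now maps into a single-vertex $H$ and is therefore constant, so its own non-triviality clause forces $|V(G)| = 1$. Hence whenever $|V(J)| = 1$ we automatically have $|V(G)| = 1$, so $\psi$ qualifies as a simple homomorphism via the escape clause. Combining the two cases yields $G \to_s J$, as required.
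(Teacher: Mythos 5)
Your proof is correct and follows essentially the same route as the paper's: compose the maps, observe that edge/arc/colour preservation composes automatically, and use surjectivity to pull two distinct vertices of $J$ back to vertices of $G$ with distinct images under $\beta \circ \phi$. The one difference is that you explicitly dispatch the degenerate case $|V(J)| = 1$ via the escape clause, which the paper's proof silently assumes away (its argument quantifies over pairs $y \neq z \in V(J)$ and is vacuous when $J$ has a single vertex); this is a small but genuine improvement in rigour.
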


\begin{proof}
	Let $\phi: G \to_s H$ and $\beta: H \to_s J$.
	Since both $\phi$ and $\beta$ preserve arcs, edges, and their colours, we have that $\beta \circ \phi: G \to J$ preserves arcs, edges, and their colours.
	To show $\beta \circ \phi$ is a simple homomorphism, it suffices to show that there exists $u,v \in V(G)$ such that $\beta \circ \phi(u) \neq \beta \circ \phi(v)$.
	Since $\beta$ is surjective, for every $y \neq z \in V(J)$ there exists $y_H, z_H \in V(H)$ such that $y_H = \beta^{-1}(y)$  $z_H = \beta^{-1}(z)$ and $y_H \neq z_H$.
	Similarly, there exists $y_G, z_G \in V(G)$ such that $y_G = \phi^{-1}(y_H)$, $z_G = \phi^{-1}(z_H)$ and $y_G \neq z_G$. 
	Therefore $\beta \circ \phi(y_G) \neq \beta \circ \phi(z_G)$, as required.
\end{proof}

\begin{theorem} \label{thm:ChromNumberInject}
	If $\phi: G \to_s H$ is surjective, then $\chi_s(G) \leq \chi_s(H)$.
\end{theorem}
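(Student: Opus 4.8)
The plan is to reduce the statement to the composition lemma \ref{lem:ComposeInject} by exhibiting a \emph{surjective} optimal simple colouring of $H$. Set $k = \chi_s(H)$ and fix a simple colouring $\beta : H \to_s J$ witnessing this, so that $|V(J)| = k$. The given hypothesis supplies the surjectivity of $\phi$, but $\beta$ need not be surjective onto $J$, so a direct appeal to Lemma \ref{lem:ComposeInject} is not immediately available.

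First I would arrange that $\beta$ is surjective without increasing the order of the target. Let $J'$ be the $\mn$-mixed subgraph of $J$ induced on the image $\beta(V(H))$. Because $J'$ is an induced sub-object, the map $\beta : H \to J'$ still satisfies the adjacency- and colour-preservation conditions (2) and (3) in the definition of simple homomorphism, and the non-triviality condition (1) depends only on $\beta$ and the domain $H$, not on the codomain; hence $\beta : H \to_s J'$ is a simple homomorphism that is now surjective by construction. Since $|V(J')| \leq |V(J)| = k$, this substitution can only help.

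Next I would invoke Lemma \ref{lem:ComposeInject} with the two surjective simple homomorphisms $\phi : G \to_s H$ and $\beta : H \to_s J'$ to conclude $G \to_s J'$. As $J'$ has at most $k$ vertices, this directly yields $\chi_s(G) \leq |V(J')| \leq k = \chi_s(H)$, which is the desired inequality.

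The only delicate point, and the step I expect to be the main obstacle, is the passage to the surjective $\beta$: one must confirm that restricting the codomain to the image preserves \emph{simplicity}, i.e.\ that the witnessing pair of distinct images from condition (1) survives. This is immediate here because that condition is a property of the map on the domain alone, so no genuine difficulty arises; the remainder is a mechanical application of the composition lemma.
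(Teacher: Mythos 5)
Your proof is correct and takes essentially the same route as the paper, whose entire argument is the one-liner ``this follows directly from Lemma \ref{lem:ComposeInject} and the definition of simple chromatic number.'' The delicate point you isolate---replacing the optimal target $J$ by the induced subgraph on $\beta(V(H))$ so that Lemma \ref{lem:ComposeInject} applies, justified by noting that condition (1) depends only on the map and its domain---is precisely the detail the paper leaves implicit, and your treatment of it is sound.
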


\begin{proof}
	This follows directly from Lemma \ref{lem:ComposeInject} and the definition of simple chromatic number.
\end{proof}

We call $G$ an $\mn$-mixed graph \emph{complete} when $U(G) = K_{|V(G)|}$.\\

\begin{corollary}\label{cor:chiSurj}
	If $G$, an $\mn$-mixed graph, has $\chi_s(G) = k$, then there exists a complete $\mn$ mixed graph $T$ so that $\phi :G \to_s T$ is surjective and $|V(T)| = k$.
\end{corollary}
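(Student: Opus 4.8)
The plan is to begin with any optimal simple colouring and then massage both its codomain and its range so that the target becomes complete while surjectivity is retained. Since $\chi_s(G) = k$, the definition of simple chromatic number furnishes an $\mn$-mixed graph $H$ with $|V(H)| = k$ and a simple homomorphism $\psi : G \to_s H$. A priori I know neither that $\psi$ is surjective nor that $U(H)$ is complete, so the argument splits into two essentially independent tasks.

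First I would force surjectivity using the minimality of $k$. Let $T' = H[\psi(V(G))]$ be the sub-$\mn$-mixed graph of $H$ induced on the image of $\psi$; every adjacency, together with its colour, that occurs between image vertices is inherited by $T'$, so $\psi$ remains a simple homomorphism when its codomain is cut down to $T'$, and by construction $\psi : G \to_s T'$ is surjective. If $\psi$ missed even a single vertex of $H$ we would have $|V(T')| < k$, yielding a simple $|V(T')|$-colouring of $G$ and contradicting $\chi_s(G) = k$. Hence $\psi$ is already surjective onto $H$ and $|V(H)| = k$, so I may take the target to be $H$ itself on exactly $k$ vertices.

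The second task is to complete the target. For every pair of vertices non-adjacent in $U(H)$ I would insert an adjacency of an arbitrarily chosen type, producing an $\mn$-mixed graph $T$ with $U(T) = K_k$. The point is that this cannot disturb $\psi$: if $a,b \in V(H)$ are non-adjacent, then the contrapositive of the homomorphism conditions shows that no adjacency $uv$ of $G$ has $\{\psi(u),\psi(v)\} = \{a,b\}$, so the newly inserted adjacency between $a$ and $b$ has empty preimage and imposes no constraint. Adding adjacencies only enlarges the set of permissible images for the edges and arcs of $G$, so all three homomorphism conditions continue to hold and the non-triviality clause is untouched. Thus $\psi : G \to_s T$ is a surjective simple homomorphism onto a complete $\mn$-mixed graph with $|V(T)| = k$, as required; and when $|V(G)| = 1$ we simply have $k = 1$ and $T = K_1$, which is vacuously complete.

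The only genuinely delicate point is this completion step: I must be certain that introducing arcs and edges into the target cannot invalidate the simple homomorphism. That is exactly where the observation that non-adjacent image pairs carry no preimage adjacency is essential, and it is also why I am free to assign the new adjacencies whatever colours I please. Everything else is bookkeeping resting on the minimality of $\chi_s(G)$, and in particular the argument never needs to appeal to composition of simple homomorphisms.
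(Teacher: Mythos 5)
Your proposal is correct, and it is essentially the argument the paper leaves implicit: the corollary is stated without proof, and the intended justification is exactly your two steps --- restrict the target to the image (where minimality of $k = \chi_s(G)$ forces surjectivity) and then complete the target by adding adjacencies of arbitrary type between non-adjacent pairs, which is harmless since such pairs have no preimage adjacency. Your closing observation that no appeal to composition (Lemma~\ref{lem:ComposeInject}) is needed is also accurate; the corollary's placement after Theorem~\ref{thm:ChromNumberInject} notwithstanding, the statement follows directly from the definitions as you argue.
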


Let $T_3$ be the transitive tournament on three vertices. 
Despite being an orientation of a complete graph, we have $\chi_s(T_3) = 2$.
Recall that an $(m,n)$-mixed clique is an $(m,n)$-mixed graph $G$ so that $\chi(G) =|V(G)|$.
The family of $(m,n)$-mixed cliques is classified as follows.\\

\begin{theorem}\cite{BDS17} \label{thm:BDS}
	An $\mn$-mixed graph $G$ is an  $(m,n)$-mixed clique if and only if for every $u,v \in V(G)$ either $uv \in E(U(G))$ or there is a vertex $z$ such that $z$ is between $u$ and $v$.
\end{theorem}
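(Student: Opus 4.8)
The plan is to prove both implications through a vertex-identification (or \emph{folding}) argument, exploiting the basic fact that $\chi(G) = |V(G)|$ precisely when no homomorphism out of $G$ merges two distinct vertices. Since the identity map gives $\chi(G) \le |V(G)|$, we have $\chi(G) < |V(G)|$ if and only if $G$ admits a homomorphism $\phi : G \to H$ with $|V(H)| < |V(G)|$, and by pigeonhole such a $\phi$ must satisfy $\phi(u) = \phi(v)$ for some $u \neq v$. Condition $1$ in the definition of homomorphism forces any such merged pair to be non-adjacent in $U(G)$. So the whole statement reduces to characterizing which non-adjacent pairs $u,v$ can be legally merged, and I claim a pair can be merged exactly when no vertex $z$ lies between $u$ and $v$.

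For the direction ``the stated condition implies $G$ is a clique,'' I would argue directly. Suppose the condition holds and, for contradiction, that $\phi : G \to H$ is a homomorphism with $\phi(u) = \phi(v)$ for some $u \neq v$. As noted, $uv \notin E(U(G))$, so by hypothesis there is a vertex $z$ between $u$ and $v$; in particular $uz, vz \in E(U(G))$ and the adjacencies $uz$ and $vz$ have different types. Applying condition $1$ to the edges $uz$ and $vz$ gives $\phi(z) \neq \phi(u) = \phi(v)$, so $\phi(u)\phi(z)$ is a genuine adjacency of $H$. But conditions $2$ and $3$ require this single adjacency to carry simultaneously the type of $uz$ and the type of $vz$, which are distinct --- a contradiction. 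Hence no homomorphism merges two vertices and $\chi(G) = |V(G)|$.

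For the converse I would instead assume the condition fails and construct an explicit merge. Pick a non-adjacent pair $u,v$ with no vertex between them, and let $G'$ be obtained from $G$ by identifying $u$ and $v$ into a single vertex $w$, leaving all other vertices and adjacencies untouched. The natural projection $\pi : G \to G'$ with $\pi(u) = \pi(v) = w$ then witnesses $\chi(G) \le |V(G)| - 1$. The crux --- the step I expect to require the most care --- is verifying that $G'$ is a well-defined \emph{simple} $(m,n)$-mixed graph and that $\pi$ is an honest homomorphism. For any third vertex $z$ adjacent to both $u$ and $v$, the merged vertex $w$ inherits two adjacencies to $z$; these are consistent (yielding a single, properly coloured adjacency) exactly because $u$ and $v$ \emph{agree} on $z$, which is guaranteed since no $z$ is between them. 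Non-adjacency of $u$ and $v$ ensures $w$ acquires no loop, so $G'$ remains loopless with at most one adjacency between each pair, and $\pi$ preserves all adjacencies and colours by construction.

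The main obstacle, as flagged, is the well-definedness of the quotient in the converse: one must check that ``$z$ is between $u$ and $v$'' is exactly the obstruction to merging, across all three adjacency cases (edge, in-arc, out-arc) of the definition of \emph{agree}, and that ruling it out leaves a legitimate target. Once that bookkeeping is done, both directions follow immediately. Note that Lemma \ref{lem:ConvSame} is not needed for this particular statement, though convexity is the natural tool if one later wishes to iterate such merges down to a minimum target.
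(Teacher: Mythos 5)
Your proof is correct, but note that there is nothing in the paper to compare it against: Theorem \ref{thm:BDS} is quoted from \cite{BDS17} and the paper supplies no proof of it. Your folding argument is the natural one and it goes through. Two remarks. First, the contradiction in your forward direction rests on the paper's standing assumption that all $(m,n)$-mixed graphs --- including targets --- are \emph{simple}, so that the pair $\{\phi(u),\phi(z)\}$ supports at most one adjacency; this is precisely what is needed in the one disagreement case your phrase ``different types'' slightly glosses, namely $uz \in A(G)$ and $zv \in A(G)$ (arcs in opposite senses, possibly of the same colour), where the clash is not two colours on one adjacency but the simultaneous presence of arcs $\phi(u)\phi(z)$ and $\phi(z)\phi(u)$, which simplicity forbids. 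Since you invoke ``this single adjacency,'' the point is implicitly there, but spelling out the five disagreement cases (edge vs.\ arc, two edge colours, two arc colours toward $z$, two arc colours away from $z$, opposite orientations) would make the step airtight. Second, your converse --- identifying a non-adjacent pair with no vertex between them and checking that the quotient is a well-defined simple $(m,n)$-mixed graph because the pair agrees on every common neighbour --- is exactly right, and this identification device is the same one the paper itself deploys later in the simple-colouring setting (e.g., in Theorems \ref{thm:isaClique} and \ref{thm:CompletePoly}), so your proof is stylistically consistent with the paper even though the paper never proves this particular statement.
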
 

We define an \emph{$(m,n)$-mixed simple clique} analogously, and find the following classification.\\

\begin{theorem}\label{thm:isaClique}
	An $\mn$-mixed graph $G$  is a  $(m,n)$-mixed simple clique if and only if $conv(\{u,v\}) = |V(G)|$ for all $u,v\in V(G)$.
\end{theorem}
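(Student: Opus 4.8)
The plan is to prove the two implications separately, after reformulating the simple clique condition in a convenient form. First I would observe that $G$ is an $(m,n)$-mixed simple clique, i.e. $\chi_s(G) = |V(G)|$, exactly when no simple colouring of $G$ assigns the same colour to two distinct vertices: any simple colouring using fewer than $|V(G)|$ colours must, by pigeonhole, identify some pair $u \neq v$, and conversely a colouring that identifies a pair uses at most $|V(G)| - 1$ colours. So the theorem reduces to showing that such a merging colouring exists if and only if some pair $u,v$ has $conv(\{u,v\}) \neq V(G)$.

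For the backward direction, I would suppose $conv(\{u,v\}) = V(G)$ for every pair of distinct vertices, and assume toward a contradiction that some simple colouring $c$ satisfies $c(u) = c(v)$ for some $u \neq v$. Applying Lemma \ref{lem:ConvSame} to the monochromatic set $\{u,v\}$ forces $c$ to be constant on $conv(\{u,v\}) = V(G)$. This contradicts condition (1) in the definition of a simple homomorphism, which requires at least two colours whenever $|V(G)| > 1$. Hence no merging colouring exists and $G$ is a simple clique.

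For the forward direction I would argue the contrapositive: assuming some pair $u \neq v$ has $C = conv(\{u,v\})$ a proper subset of $V(G)$, I would construct a simple colouring that sends all of $C$ to a single colour and gives every vertex outside $C$ its own distinct colour. Since $C$ contains at least two vertices yet omits at least one vertex of $G$, this map uses at least two but at most $|V(G)| - 1$ colours, and so it witnesses $\chi_s(G) < |V(G)|$ provided it is a genuine simple homomorphism onto the associated quotient target.

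The main obstacle, and the only point where convexity is used, is verifying that this quotient is well defined, i.e. that no pair of colours is forced to carry two different adjacency types. The only cases to check are adjacencies among the vertices outside $C$, which are injective and cause no conflict, and adjacencies between a fixed outside vertex $w$ and $C$. For the latter I would show that if $w \notin C$ is adjacent to two vertices $a,b \in C$, then $a$ and $b$ agree on $w$: if they did not, then $w$ would lie between $a$ and $b$, yet $a,b \in C$ and $C$ is convex, so no vertex outside $C$ can lie between them, a contradiction. Thus all adjacencies from $w$ into $C$ share one type, the quotient is well defined, and the merging map is a valid simple colouring (adjacencies internal to $C$ become loops, which are permitted by the reflexive target). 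This establishes the contrapositive and completes the proof.
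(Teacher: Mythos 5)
Your proof is correct, and one half of it takes a genuinely different route from the paper. Your forward (contrapositive) direction is essentially the paper's argument: the paper also contracts $conv(\{u,v\})$ to a single vertex and notes this yields a surjective simple homomorphism to a smaller target (then invokes Theorem \ref{thm:ChromNumberInject} rather than counting colours directly); in fact your explicit check that the quotient is well defined --- that an outside vertex $w$ adjacent to $a,b \in conv(\{u,v\})$ must agree with both, else $w$ would be between them and violate convexity --- is precisely the verification the paper glosses over when it asserts the quotient map is a simple homomorphism, so your write-up is the more careful of the two. Where you genuinely diverge is the backward direction. The paper shows that $conv(\{u,v\}) = V(G)$ for all pairs forces every simple colouring to be a proper colouring, and then reduces to showing $G$ is an $(m,n)$-mixed clique via the external classification Theorem \ref{thm:BDS} of \cite{BDS17}. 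You instead argue directly: if a simple colouring identified any pair $u \neq v$, Lemma \ref{lem:ConvSame} would force it to be constant on $conv(\{u,v\}) = V(G)$, contradicting the non-triviality condition. Your route is more elementary and self-contained (it needs only Lemma \ref{lem:ConvSame}, not the clique classification), whereas the paper's route has the side benefit of exhibiting the connection between simple cliques and ordinary $(m,n)$-mixed cliques --- namely that under this hypothesis the two notions coincide --- which is thematically useful elsewhere in the paper.
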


\begin{proof} 
	Let $G$ be an $(m,n)$-mixed simple clique.
	The claim holds when $|V(G)| \leq 2$.
	Assume 	$|V(G)| \geq 3$ and there exists $u,v \in V(G)$ so that $conv(\{u,v\}) \subset V(G)$.
	Let $H$ be the $(m,n)$-mixed graph formed by identifying all vertices of $conv(\{u,v\})$ into a single vertex.
	There exists a surjective simple homomorphism $\phi: G \to_s H$.
	By Theorem \ref{thm:ChromNumberInject}, we have $\chi_s(G) \leq \chi_s(H)$.
	However, since $H$ has fewer vertices than $G$, we have $\chi_s(H) < |V(G)| = \chi_s(G)$, a contradiction.

	Assume $conv(\{u,v\}) = V(G)$ for all $u,v\in V(G)$.
	In particular we have $conv(uv) = V(G)$ for all $u,v\in E(U(G))$.
	Therefore every simple colouring of $G$ is also an $(m,n)$-mixed colouring of $G$.
	Thus it suffices to show that $G$ is an $(m,n)$-mixed clique.
	Since $conv(\{u,v\}) = V(G)$ for all $u,v\in V(G)$, for all non-adjacent pairs $x,y \in V(G)$, there is a vertex between $x$ and $y$.
	The result follows by Theorem \ref{thm:BDS}.
\end{proof}

\begin{corollary}\label{cor:CompleteClique}
	A complete $(m,n)$-mixed graph $G$ is an $(m,n)$-mixed clique if and only if  $conv(\{uv\}) = |V(G)|$ for all $uv\in E(U(G))$.
\end{corollary}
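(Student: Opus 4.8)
The plan is to obtain this as an immediate specialization of Theorem~\ref{thm:isaClique} to complete targets. Theorem~\ref{thm:isaClique} already characterizes $(m,n)$-mixed simple cliques by the single condition that $conv(\{u,v\}) = V(G)$ for \emph{every} pair of distinct vertices $u,v \in V(G)$. The only extra ingredient needed here is the effect of completeness on the range of the quantifier: by definition $G$ is complete exactly when $U(G) = K_{|V(G)|}$, so every pair of distinct vertices $u,v$ satisfies $uv \in E(U(G))$. Consequently the universally quantified condition ``$conv(\{u,v\}) = V(G)$ for all $u,v \in V(G)$'' and the edge-restricted condition ``$conv(\{u,v\}) = V(G)$ for all $uv \in E(U(G))$'' range over precisely the same $\binom{|V(G)|}{2}$ pairs, hence are literally equivalent.

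With that observation in hand I would dispatch both directions simultaneously. If $conv(\{u,v\}) = V(G)$ holds for all $uv \in E(U(G))$, then since in a complete $(m,n)$-mixed graph every pair of distinct vertices is such an edge, the hypothesis of the ``if'' direction of Theorem~\ref{thm:isaClique} is met, so $G$ is a simple clique. Conversely, if $G$ is a simple clique, Theorem~\ref{thm:isaClique} yields $conv(\{u,v\}) = V(G)$ for all $u,v \in V(G)$, and in particular for all $uv \in E(U(G))$. The single point worth stating explicitly is that $conv(\{u,v\})$ is only defined on a two-element set $\{u,v\}$ with $u \neq v$, so no vertex pair is left unexamined when we restrict attention to edges of $K_{|V(G)|}$; I read the displayed condition as $conv(\{u,v\}) = V(G)$ (equivalently $|conv(\{u,v\})| = |V(G)|$), matching the statement of Theorem~\ref{thm:isaClique}.

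I do not anticipate a genuine obstacle: essentially all the content is carried by Theorem~\ref{thm:isaClique}, and the corollary merely records that, for complete graphs, testing the convex-hull condition on the edges of $U(G)$ is the same as testing it on all vertex pairs. The one thing to be careful about is that the relevant notion is the \emph{simple} clique and not the ordinary $(m,n)$-mixed clique. By Theorem~\ref{thm:BDS} every complete $(m,n)$-mixed graph is automatically an $(m,n)$-mixed clique (the adjacency alternative in that theorem holds trivially), whereas $T_3$ shows that a complete $(m,n)$-mixed graph need not be a simple clique and indeed fails the convex-hull condition on the edge $uv$ with $conv(\{u,v\}) = \{u,v\}$. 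This sanity check confirms that the stated equivalence is exactly the simple-clique criterion of Theorem~\ref{thm:isaClique} read off on the edge set of $U(G)$.
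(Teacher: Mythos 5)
Your proposal is correct and takes the same route the paper intends: the corollary is stated without proof precisely because it is the immediate specialization of Theorem~\ref{thm:isaClique} to complete $(m,n)$-mixed graphs, where every pair of distinct vertices is an edge of $U(G)$, so the edge-restricted quantifier coincides with the all-pairs quantifier. Your side remark is also right and worth keeping: the corollary must be read as ``$(m,n)$-mixed \emph{simple} clique'' (the paper's wording drops the word), since by Theorem~\ref{thm:BDS} every complete $(m,n)$-mixed graph is trivially an ordinary clique, while $T_3$ shows the convex-hull condition can still fail.
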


We show that the convex hull of a pair of vertices of an arbitrary $\mn$-mixed graph can be efficiently computed, and so the problem of deciding if a $\mn$-mixed graph is an $(m,n)$-mixed simple clique is Polynomial.

Let $G$ be a $\mn$-mixed graph.
Let $X \subseteq V(G)$.
Let $N^{X}_0 = X$.
For $i > 0$, if $N^{X}_i$ exists, let $B^{X}_i \subseteq V(G) - N^{X}_i$ be the set of vertices $x$ such that $x$ is between a pair of vertices in $N^{X}_i$.
If $B^{X}_i \neq \emptyset$, then let $N^{X}_{i+1} = N^{X}_i \cup B^{X}_i$.
Otherwise let $N^{X} = N^{X}_i$.\\

\begin{theorem}\label{thm:ConvexHullIterative}
Let $G$ be an $\mn$-mixed graph. For every $X \subseteq V(G)$ we have	$N^{X} = conv(X)$
\end{theorem}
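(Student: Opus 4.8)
The plan is to establish the two inclusions $conv(X) \subseteq N^{X}$ and $N^{X} \subseteq conv(X)$ separately, after first checking that the construction terminates so that $N^{X}$ is well-defined. Since each step with $B^{X}_i \neq \emptyset$ sets $N^{X}_{i+1} = N^{X}_i \cup B^{X}_i$ where $B^{X}_i \subseteq V(G) - N^{X}_i$ is nonempty, the sets $N^{X}_0, N^{X}_1, \dots$ form a strictly increasing chain of subsets of the finite set $V(G)$. Hence $B^{X}_i = \emptyset$ for some $i$, at which point the process halts and $N^{X}$ is defined.

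For the inclusion $conv(X) \subseteq N^{X}$, I would argue that $N^{X}$ is itself a convex set containing $X$; since $conv(X)$ is by definition the smallest such set, the inclusion follows. That $X = N^{X}_0 \subseteq N^{X}$ is immediate from the construction. For convexity, observe that the process terminated precisely because $B^{X}_i = \emptyset$ for $N^{X}_i = N^{X}$; that is, there is no vertex of $V(G) - N^{X}$ that is \emph{between} a pair of vertices of $N^{X}$. This is exactly the defining condition of a convex set, so $N^{X}$ is convex.

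For the reverse inclusion $N^{X} \subseteq conv(X)$, I would show by induction on $i$ that $N^{X}_i \subseteq conv(X)$ for every $i$. The base case $N^{X}_0 = X \subseteq conv(X)$ holds by the definition of the convex hull. For the inductive step, suppose $N^{X}_i \subseteq conv(X)$. Any vertex $x \in B^{X}_i$ is \emph{between} some pair $u, w \in N^{X}_i \subseteq conv(X)$. Were $x \notin conv(X)$, then $x$ would be a vertex of $V(G) - conv(X)$ lying between two vertices of $conv(X)$, contradicting the convexity of $conv(X)$. Hence $B^{X}_i \subseteq conv(X)$, giving $N^{X}_{i+1} = N^{X}_i \cup B^{X}_i \subseteq conv(X)$. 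Since $N^{X} = N^{X}_i$ for the terminal index, we conclude $N^{X} \subseteq conv(X)$, and combining the two inclusions yields $N^{X} = conv(X)$.

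This argument follows the standard \emph{closure equals least closed set} template, so no single step should present a genuine obstacle. The point requiring the most care is the verification that the halting condition $B^{X}_i = \emptyset$ coincides exactly with the definition of convexity for $G$ — in particular that the \emph{between} relation used in the construction is the same one used in the definition of a convex set — together with the well-definedness of the convex hull invoked in the induction, which the excerpt has already noted.
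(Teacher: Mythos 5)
Your proof is correct and takes essentially the same approach as the paper: one inclusion follows because $N^{X}$ is a convex set containing $X$, and the other by showing stage-by-stage that each $B^{X}_i$ stays inside $conv(X)$ (your induction is just the contrapositive form of the paper's least-index contradiction argument). If anything, yours is the more careful write-up, since you verify termination explicitly and correctly pin the final contradiction on the convexity of $conv(X)$, where the paper's last sentence misstates this as convexity of $N^{X}$.
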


\begin{proof}
	Observe the claim holds when $|X| = 1$.
	Assume $|X| \geq 2$.
	By definition, $N^{X}$ is convex and $X \subseteq N^{X}$.
	If  $N^{X} \neq conv(X)$, then by definition of $conv(X)$, we have $|N^{X}| > |conv(X)|$.
	Therefore there is some least index $i \geq 0$ so that $B^{X}_i$ contains a vertex $x$ where $x \in N^{X}$, but $x \notin conv(X)$.
	By definition, $x$ is between a pair of vertices in $N^{X}_{i}$.
	We reach a contradiction by observing that $N^{X}_{i} \subseteq N^{X}$ and $N^{X}$ is convex, but $x \notin N^{X}$.
\end{proof}

\begin{corollary} \label{cor:ConvexSubsets}
	For $G$,  an $\mn$-mixed graph, and  $X \subseteq X^\prime \subseteq V(G)$, we have $Conv(X) \subseteq Conv(X^\prime)$.\\
\end{corollary}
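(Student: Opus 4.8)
The plan is to deduce this monotonicity property directly from the minimality built into the definition of the convex hull, with the iterative characterization of Theorem \ref{thm:ConvexHullIterative} available as a backup. First I would recall that $conv(X^\prime)$ is, by definition, a convex subset of $V(G)$ containing $X^\prime$. Since $X \subseteq X^\prime$, the set $conv(X^\prime)$ is then a convex set containing $X$ as well. As $conv(X)$ is defined to be the \emph{smallest} convex set containing $X$, minimality immediately yields $conv(X) \subseteq conv(X^\prime)$. This argument is purely order-theoretic and requires nothing beyond the well-definedness of the convex hull already noted in the text.

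Should one prefer an argument that makes the dependence on Theorem \ref{thm:ConvexHullIterative} explicit, I would instead show by induction on $i$ that $N^{X}_i \subseteq N^{X^\prime}$ for every $i$ for which $N^{X}_i$ is defined. For the base case, $N^{X}_0 = X \subseteq X^\prime \subseteq N^{X^\prime}$. For the inductive step, suppose $N^{X}_i \subseteq N^{X^\prime}$ and let $x \in B^{X}_i$, so that $x$ lies between some pair of vertices of $N^{X}_i$. That pair also lies in $N^{X^\prime}$, and since $N^{X^\prime}$ is convex, $x \in N^{X^\prime}$; hence $N^{X}_{i+1} = N^{X}_i \cup B^{X}_i \subseteq N^{X^\prime}$. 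Taking the union over all $i$ gives $N^{X} \subseteq N^{X^\prime}$, and Theorem \ref{thm:ConvexHullIterative} then translates this into $conv(X) \subseteq conv(X^\prime)$.

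I do not anticipate a genuine obstacle here: the statement is a routine monotonicity consequence of either the defining minimality of the convex hull or the monotone growth of the sequence $(N^{X}_i)$. The only point requiring mild care in the inductive version is that the two sequences $(N^{X}_i)$ and $(N^{X^\prime}_i)$ need not stabilize at the same index; this is sidestepped by comparing $N^{X}_i$ against the fully formed convex set $N^{X^\prime}$ rather than against $N^{X^\prime}_i$ term by term, which is why I would phrase the induction against $N^{X^\prime}$ directly.
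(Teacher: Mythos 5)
Your proposal is correct, and your first argument (the convex hull is by definition the smallest convex set containing $X$, and $conv(X^\prime)$ is a convex set containing $X$) is precisely the immediate justification the paper leaves implicit, as it states this corollary without proof. Your backup induction via Theorem \ref{thm:ConvexHullIterative} is also sound, including your correct handling of the subtlety that one should compare $N^{X}_i$ against the stabilized set $N^{X^\prime}$ rather than against $N^{X^\prime}_i$.
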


\begin{corollary}
	Deciding if a $\mn$-mixed graph is an $(m,n)$-mixed simple clique is Polynomial.
\end{corollary}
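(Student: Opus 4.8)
The plan is to combine Theorem \ref{thm:isaClique}, which reduces the property of being an $(m,n)$-mixed simple clique to a statement about convex hulls of pairs, with the constructive characterization of convex hulls given in Theorem \ref{thm:ConvexHullIterative}. By Theorem \ref{thm:isaClique}, $G$ is an $(m,n)$-mixed simple clique if and only if $conv(\{u,v\}) = V(G)$ for every pair $u,v \in V(G)$. Thus it suffices to exhibit a polynomial-time procedure that, for a fixed pair $\{u,v\}$, computes $conv(\{u,v\})$ and compares it to $V(G)$; running this over all $O(|V(G)|^2)$ pairs then decides the property.

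To compute $conv(\{u,v\})$, I would directly implement the iterative process defining $N^{\{u,v\}}$ from Theorem \ref{thm:ConvexHullIterative}. Starting from $N^{\{u,v\}}_0 = \{u,v\}$, at each step I would scan every vertex $x \notin N^{\{u,v\}}_i$ and test whether $x$ is between some pair of vertices already in $N^{\{u,v\}}_i$, thereby forming $B^{\{u,v\}}_i$ and setting $N^{\{u,v\}}_{i+1} = N^{\{u,v\}}_i \cup B^{\{u,v\}}_i$. The process halts when $B^{\{u,v\}}_i = \emptyset$, at which point Theorem \ref{thm:ConvexHullIterative} guarantees $N^{\{u,v\}}_i = N^{\{u,v\}} = conv(\{u,v\})$.

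It then remains to bound the running time. The \emph{between} relation can be tested in time bounded by a function of $m$ and $n$ alone, since it amounts to inspecting the adjacency types of the two adjacencies incident to the shared vertex. For a fixed pair, detecting whether a given vertex $x$ is between some pair in $N^{\{u,v\}}_i$ costs $O(|V(G)|^2)$, so constructing $B^{\{u,v\}}_i$ costs $O(|V(G)|^3)$. Because $N^{\{u,v\}}_i$ strictly grows at each non-terminal step, there are at most $|V(G)|$ iterations, whence a single convex hull is computed in $O(|V(G)|^4)$ time. Repeating this over all pairs yields a total of $O(|V(G)|^6)$, which is polynomial.

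I do not anticipate a genuine obstacle here, as the two theorems already perform the conceptual work and the remaining task is the bookkeeping of the running-time bound. The only point requiring care is the termination argument: one must observe that each non-terminal iteration adds at least one new vertex, so the number of iterations, and hence the overall cost, is bounded by the order of $G$.
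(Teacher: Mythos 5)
Your proposal is correct and follows exactly the paper's own route: the published proof is a one-line appeal to Theorems \ref{thm:isaClique} and \ref{thm:ConvexHullIterative} plus the observation that each $B_i$ can be constructed in polynomial time, which is precisely what you do. Your version merely spells out the explicit $O(|V(G)|^6)$ bookkeeping and the termination argument that the paper leaves implicit.
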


\begin{proof}
	The result follows from Theorems \ref{thm:isaClique} and \ref{thm:ConvexHullIterative} and by observing that $B_i$ in the construction of $conv(X)$ can be constructed in polynomial time.
\end{proof}

Though computing the simple chromatic number of an oriented graph is NP-hard \cite{S02}, we find that the simple chromatic number of any complete $(m,n)$-mixed graph can be computed in polynomial time.
To show this we require the following result.\\

\begin{theorem} \label{thm:notComplete}
	Let $G$ be a complete $\mn$-mixed graph with $\chi_s(G) > 2$.
	We have  $conv(\{u,v\}) \neq V(G)$ if and only if $c(u) = c(v)$ in every minimum simple colouring of $G$.
\end{theorem}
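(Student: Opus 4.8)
The plan is to prove the two implications separately, each by contrapositive, writing $C = conv(\{u,v\})$ throughout and keeping in mind the observation preceding Lemma \ref{lem:ConvSame}: in any simple colouring, colouring $u$ and $v$ alike forces all of $C$ to be monochromatic.

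The reverse implication (that $c(u)=c(v)$ in every minimum colouring forces $C \neq V(G)$) I would settle quickly. Assuming $C = V(G)$, suppose some simple colouring $c$ had $c(u)=c(v)$. Then Lemma \ref{lem:ConvSame} applied to $N=\{u,v\}$ makes all of $conv(\{u,v\}) = V(G)$ one colour, i.e.\ $c$ is constant; since $\chi_s(G)>2$ forces $|V(G)|\ge 3$, this violates the non-triviality clause in the definition of a simple homomorphism. Hence \emph{every} simple colouring separates $u$ and $v$, so certainly every minimum one does, which is the required contrapositive.

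The forward implication is the substantial direction, and I would argue it by contradiction: suppose $C \neq V(G)$ yet some minimum colouring $c$ has $c(u)\neq c(v)$. By Corollary \ref{cor:chiSurj} I may take $c : G \to_s T$ surjective with $T$ complete and $|V(T)| = k = \chi_s(G)$. Theorem \ref{thm:ChromNumberInject} then sandwiches $k = \chi_s(G) \le \chi_s(T) \le |V(T)| = k$, so $T$ is an $\mn$-mixed simple clique and, by Theorem \ref{thm:isaClique}, the convex hull in $T$ satisfies $conv_T(\{s,s'\}) = V(T)$ for every pair $s,s'$. The engine of the argument is that a simple homomorphism transports the ``between'' relation exactly: if $x$ is adjacent to both $a$ and $b$ with $c(x)\neq c(a),c(b)$, then because $c$ preserves adjacency type \emph{and} orientation, $x$ is between $a$ and $b$ in $G$ if and only if $c(x)$ is between $c(a)$ and $c(b)$ in $T$.

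The key step, and the main obstacle, is to upgrade the innocuous hypothesis $C\neq V(G)$ into a genuine obstruction by showing $c|_C$ is already \emph{surjective} onto $T$. I would induct along the iterative construction of $conv_T(\{p,q\})$ from Theorem \ref{thm:ConvexHullIterative}, where $p=c(u),\,q=c(v)$: the base colours $p,q$ have the preimages $u,v\in C$, and whenever a colour $t$ is adjoined because it lies between colours $t_1,t_2$ already carrying preimages $a,b\in C$, the transport property forces every preimage of $t$ to be between $a$ and $b$, hence to lie in $C$ by convexity. Since $T$ is a simple clique, $conv_T(\{p,q\})=V(T)$, giving $c(C)=V(T)$. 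Finally I would exploit a vertex $z\in V(G)\setminus C$, which exists as $C\neq V(G)$. Convexity of $C$ means every pair of vertices of $C$ agrees on $z$; transporting this through $c$ and using $c(C)=V(T)$ to realise every colour inside $C$ shows that any two colours other than $c(z)$ agree on $c(z)$, so $c(z)$ is between no pair of vertices of $T$. But $k\ge 3$ supplies two colours $s,s'\neq c(z)$, and $conv_T(\{s,s'\})=V(T)\ni c(z)$ would require $c(z)$ to enter the hull by betweenness, which is impossible. This contradiction completes the proof.
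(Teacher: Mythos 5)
Your proof is correct. It shares the paper's frame: the same one-line easy direction via Lemma \ref{lem:ConvSame}, and, for the hard direction, a contradiction starting from Corollary \ref{cor:chiSurj} and Theorem \ref{thm:isaClique}, with the exact transport of the \emph{between} relation along the surjective simple homomorphism $c:G\to_s T$ as the engine (the paper relies on this transport too, though it never states it explicitly). The genuine difference is the endgame. The paper stays inside $G$: after pulling back $conv_T(\{c(u),c(v)\})=V(T)$ to place all third-coloured vertices into $conv(\{u,v\})$, it picks an auxiliary third-coloured vertex $z'$ and pulls back again on pairs involving $c(z')$, so that the colour classes of $c(u)$ and $c(v)$ also land in the hull; this forces $conv(\{u,v\})=V(G)$, contradicting the hypothesis directly (and incidentally giving the content of Corollary \ref{cor:ConvEdges}). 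You instead extract only the weaker fact that $c$ maps $conv(\{u,v\})$ onto $V(T)$, and then derive the contradiction inside $T$: a vertex $z\notin conv(\{u,v\})$ has every pair of hull vertices agreeing on it, so $c(z)$ is between no pair of vertices of $T$, and hence by Theorem \ref{thm:ConvexHullIterative} it can never enter $conv_T(\{s,s'\})$ for $s,s'\neq c(z)$ (such a pair exists since $\chi_s(G)\geq 3$), contradicting that $T$ is a simple clique. Your route trades the paper's bookkeeping with $z'$ and the two colour classes for one extra appeal to the iterative hull construction, and it is more self-contained on two points the paper glosses over: your sandwich $\chi_s(G)\leq\chi_s(T)\leq|V(T)|$ justifies that the merely \emph{complete} target provided by Corollary \ref{cor:chiSurj} is in fact a simple clique, and your induction along the stages of Theorem \ref{thm:ConvexHullIterative} is the argument hiding behind the paper's one-sentence claim that surjectivity places $Z$ inside $conv(\{u,v\})$.
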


\begin{proof}
	Let  $G$ be a complete $\mn$-mixed graph with $\chi_s(G) > 2$.
	If $c(u) = c(v)$ in any simple colouring of $G$, then $conv(\{u,v\}) \neq V(G)$.
	
	Assume now that $conv(\{u,v\}) \neq V(G)$, but $c(u) \neq c(v)$.
	By Lemma \ref{cor:chiSurj}, there exists an $\mn$-mixed clique $T$ so that $\phi :G \to_s T$ is surjective and $V(T) = \chi_s(G)$.
	By Theorem \ref{thm:isaClique}, $conv(\{c(u), c(v)\}) = V(T)$.
	Let $Z = \{z \in V(G) | c(u) \neq c(z) \neq c(v)\}$.
	As $c(z) \in conv(\{c(u),c(v)\})$ and $c$ is surjective, we have  $Z \subset conv(\{u,v\})$.
	Since $\chi_s(G) > 2$, $Z$  is non-empty.
	Consider $z^\prime \in Z$.
	By Theorem \ref{thm:isaClique}, $conv(\{c(z^\prime), c(v)\}) = V(T)$.
	Let $V^\prime = \{v^\prime \in V(G) | c(v^\prime)= c(v)\}$.
	As $c(v) \in conv(\{c(u),c(z^\prime)\})$ and $c$ is surjective we have $V^\prime  \subseteq conv(\{z^\prime, u\})$.
	Thus $V^\prime \subseteq conv(\{u,v\})$.
	Similarly, $\{u^\prime \in V(G) | c(u^\prime) = c(u)\} \subseteq conv(\{u,v\})$.
	Therefore $conv(\{u,v\})  = V(G)$, a contradiction.
	Therefore $c(u) = c(v)$.

\end{proof}

\begin{corollary} \label{cor:ConvEdges}
	Let $G$ be a complete $\mn$-mixed graph so that $\chi_s(G) > 2$ and let $c$ be a minimum colouring of $G$.
	If $c(u) \neq c(v)$, then $Conv\{u,v\} = V(G)$.\\
\end{corollary}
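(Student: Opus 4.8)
The plan is to obtain this as an immediate consequence of Theorem \ref{thm:notComplete} by contraposition. That theorem already gives the biconditional that, for a complete $\mn$-mixed graph $G$ with $\chi_s(G) > 2$, we have $conv(\{u,v\}) \neq V(G)$ if and only if $c(u) = c(v)$ in every minimum simple colouring of $G$. Since the corollary hands me a fixed minimum colouring $c$ with $c(u) \neq c(v)$ and asks for $conv(\{u,v\}) = V(G)$, the natural move is to argue by contradiction: suppose instead that $conv(\{u,v\}) \neq V(G)$.

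Under this supposition, I would invoke the forward direction of Theorem \ref{thm:notComplete}, which forces $c(u) = c(v)$ in \emph{every} minimum simple colouring of $G$. The only point that warrants a moment's care is the alignment of quantifiers: the theorem speaks of all minimum colourings simultaneously, whereas the corollary fixes a single one. This causes no difficulty, however, since a property holding for every minimum colouring holds in particular for the given $c$. Hence $c(u) = c(v)$, directly contradicting the hypothesis $c(u) \neq c(v)$.

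Having reached a contradiction, I conclude that $conv(\{u,v\}) = V(G)$, which is precisely the claim. I do not anticipate any genuine obstacle in carrying this out; the substantive work is entirely absorbed into Theorem \ref{thm:notComplete}, and the corollary merely specializes its ``every minimum colouring'' conclusion to the particular colouring $c$ in hand, while also noting that the hypotheses $G$ complete and $\chi_s(G) > 2$ are inherited verbatim so that the theorem applies.
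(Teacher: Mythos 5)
Your proof is correct and matches the paper's intent: the paper gives no separate argument for this corollary, treating it exactly as you do---as the contrapositive of the forward direction of Theorem~\ref{thm:notComplete}, specialized from ``every minimum colouring'' to the particular colouring $c$ at hand. Nothing further is needed.
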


Theorem \ref{thm:notComplete} implies  minimum simple colourings of complete $\mn$-mixed graphs are unique up to permutation of the labels of the colours.\\

\begin{corollary}
	Let $G$ be a complete $\mn$-mixed graph with $\chi_s(G) > 2$. Up to the labelling of the colour classes, there is a unique minimum colouring of $G$.
\end{corollary}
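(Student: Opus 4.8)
The plan is to use Theorem \ref{thm:notComplete} to show that the partition of $V(G)$ into colour classes is completely determined by the convex-hull structure of $G$, and hence does not depend on which minimum colouring we choose. Concretely, I would define a relation on $V(G)$ by declaring $u \sim v$ precisely when $conv(\{u,v\}) \neq V(G)$. The goal is then to argue that for \emph{every} minimum simple colouring $c$ of $G$, two vertices share a colour if and only if $u \sim v$. Once this is established, the colour classes of any two minimum colourings are forced to be the same sets, so the two colourings can differ only by a permutation of the colour labels.

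The argument proceeds in a few short steps. First I would fix a minimum simple colouring $c$, which uses exactly $k = \chi_s(G)$ colours. The forward implication, that $c(u) = c(v)$ forces $conv(\{u,v\}) \neq V(G)$, is exactly the opening observation in the proof of Theorem \ref{thm:notComplete} and holds for any simple colouring. The reverse implication, that $conv(\{u,v\}) \neq V(G)$ forces $c(u) = c(v)$, is the contrapositive of Corollary \ref{cor:ConvEdges} (equivalently, the nontrivial direction of Theorem \ref{thm:notComplete}), and is valid here because $c$ is a minimum colouring and $\chi_s(G) > 2$. Combining the two gives the desired equivalence: in $c$ we have $c(u) = c(v)$ if and only if $u \sim v$. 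Since this criterion makes no reference to $c$ itself, the collection of colour classes of $c$ coincides with the blocks of the partition induced by $\sim$.

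To finish, I would take a second minimum simple colouring $c^\prime$, again using $k$ colours, and apply the same equivalence to conclude that its colour classes are likewise exactly the blocks of $\sim$. Thus $c$ and $c^\prime$ induce the same partition of $V(G)$, and any bijection matching the common blocks carries $c$ to $c^\prime$; this is precisely uniqueness up to relabelling of the colour classes. I do not expect a serious obstacle, since the statement is essentially a repackaging of Theorem \ref{thm:notComplete}. The one point deserving care is that I never need to verify directly that $\sim$ is transitive: transitivity is automatic because $\sim$ agrees with the genuine partition into colour classes of a fixed minimum colouring. I would also keep the hypothesis $\chi_s(G) > 2$ in force throughout, as Theorem \ref{thm:notComplete} is stated only under it.
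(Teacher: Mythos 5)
Your proof is correct and takes essentially the same approach as the paper: both rest on Theorem \ref{thm:notComplete}, which makes ``$c(u)=c(v)$ in a minimum simple colouring'' equivalent to the colouring-independent condition $conv(\{u,v\}) \neq V(G)$, so all minimum colourings induce the same partition of $V(G)$. The paper merely packages this as a contradiction (two minimum colourings disagreeing on a pair $u,v$ would force $conv(\{u,v\})$ to be both equal and unequal to $V(G)$), whereas you state the common partition directly via the relation $\sim$.
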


\begin{proof}
	Let $c$ be a minimum colouring of $G$.
	Assume $c$ is not unique.
	Therefore there exists $u,v \in V(G)$ and a minimum colouring $c^\prime$ of $G$  such that $c(u) = c(v)$, but $c^\prime(u) \neq c^\prime(v)$.
	Since  $c(u) = c(v)$, we have $conv(\{u,v\}) \neq V(G)$.
	However, since $c^\prime(u) \neq c^\prime(v)$, we have $conv(\{u,v\}) = V(G)$, a contradiction.\\
\end{proof}

\begin{theorem} \label{thm:CompletePoly}
	Let $G$ a complete $(m,n)$-mixed graph. For any fixed $k \geq 1$ it can be decided in polynomial time if $\chi_s(G) \leq k$
\end{theorem}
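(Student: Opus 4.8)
The plan is to reduce the decision problem to counting the parts of a single explicitly computable partition of $V(G)$, using the characterization of minimum colourings of complete $\mn$-mixed graphs developed above. First I would dispose of the small cases. Since $G$ is complete, $\chi_s(G) = 1$ precisely when $|V(G)| = 1$, and whenever $|V(G)| \geq 2$ we have $\chi_s(G) \geq 2$ (a simple homomorphism must separate some pair of vertices). Thus for $k = 1$ it suffices to test $|V(G)| = 1$, and for $k \geq 2$ I would first check whether $|V(G)| \leq 2$ (in which case $\chi_s(G) \leq 2 \leq k$) and otherwise invoke Corollary~\ref{cor:poly2} to decide in polynomial time whether $\chi_s(G) = 2$; if so, we answer ``yes''. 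After this reduction I may assume $\chi_s(G) > 2$, which is exactly the hypothesis required by Theorem~\ref{thm:notComplete}.

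The key observation is that under the hypothesis $\chi_s(G) > 2$ the colour classes of a minimum colouring are determined by convex hulls alone. Define a relation on $V(G)$ by declaring $u \sim v$ when $conv(\{u,v\}) \neq V(G)$. By Theorem~\ref{thm:notComplete}, for a complete $\mn$-mixed graph with $\chi_s(G) > 2$ we have $conv(\{u,v\}) \neq V(G)$ if and only if $c(u) = c(v)$ in every minimum simple colouring $c$; combined with the preceding corollary asserting that the minimum colouring is unique up to relabelling of the colour classes, this shows that $\sim$ coincides with the equality-of-colour relation of the (essentially unique) minimum colouring. In particular $\sim$ is an equivalence relation, its classes are exactly the colour classes of $c$, and by Corollary~\ref{cor:chiSurj} the colouring $c$ is surjective onto a target on $\chi_s(G)$ vertices, so the number of classes of $\sim$ equals $\chi_s(G)$.

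The algorithm is then immediate. For each of the $O(|V(G)|^2)$ pairs $u,v$ I would compute $conv(\{u,v\})$ using the iterative construction of Theorem~\ref{thm:ConvexHullIterative}, which runs in polynomial time, and record whether $conv(\{u,v\}) = V(G)$. Forming the graph on $V(G)$ whose edges are the pairs with $conv(\{u,v\}) \neq V(G)$, its connected components coincide with the classes of $\sim$ (since $\sim$ is already transitive), and the number of components can be counted in polynomial time. We answer ``yes'' exactly when this count is at most $k$. This in fact computes $\chi_s(G)$ exactly, so the bound $\chi_s(G) \leq k$ is decided for every $k$, not merely for each fixed value.

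The main obstacle is conceptual rather than computational: the correctness of the count rests entirely on the fact that, for complete graphs with $\chi_s(G) > 2$, the pairwise relation $conv(\{u,v\}) \neq V(G)$ is globally consistent, that is, an equivalence relation whose classes number exactly $\chi_s(G)$. This is where Theorem~\ref{thm:notComplete} and the uniqueness of the minimum colouring do all the work; without them one could only read off pairwise constraints and would still face a genuine colouring problem. The remaining care is simply in the boundary cases $\chi_s(G) \leq 2$, where Theorem~\ref{thm:notComplete} does not apply and Corollary~\ref{cor:poly2} must be used instead.
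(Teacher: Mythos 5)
Your proposal is correct, but it takes a different route from the paper's. The paper proceeds by induction on $|V(G)|$ with contraction: test whether $G$ is a simple clique (Theorem~\ref{thm:isaClique}); if not, find a pair with $conv(\{u,v\}) \neq V(G)$ (Corollary~\ref{cor:ConvEdges}), identify all of $conv(\{u,v\})$ to a single vertex to get $G^\prime$, argue via Lemma~\ref{lem:ConvSame} and Theorem~\ref{thm:notComplete} that $\chi_s(G) = \chi_s(G^\prime)$, and recurse until a simple clique remains. You instead compute, in one pass, the relation $u \sim v$ iff $conv(\{u,v\}) \neq V(G)$ for all $O(|V(G)|^2)$ pairs and count its classes. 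Both arguments pivot on Theorem~\ref{thm:notComplete}, but the supporting machinery differs: your route needs the uniqueness-of-minimum-colouring corollary (or, equivalently, the observation that every minimum colouring is surjective, via Corollary~\ref{cor:chiSurj}) to conclude that the classes of $\sim$ number exactly $\chi_s(G)$, whereas the paper's route needs Lemma~\ref{lem:ConvSame} to justify that contracting a convex hull preserves the simple chromatic number, plus an induction. Note that transitivity of $\sim$ is automatic from Theorem~\ref{thm:notComplete} alone, since ``same colour in every minimum colouring'' is transitive, so your appeal to uniqueness is only really needed for the class count. What your approach buys is the elimination of the induction and of the contraction step entirely, and it makes explicit something the paper leaves implicit: the algorithm computes $\chi_s(G)$ exactly, so the restriction to fixed $k$ is unnecessary in both proofs. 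What the paper's approach buys is a self-reduction (complete graphs contract to smaller complete graphs) that mirrors the structural picture of minimum colourings as successive hull identifications. Your handling of the boundary cases $\chi_s(G) \leq 2$, where Theorem~\ref{thm:notComplete} does not apply, matches the paper's use of Corollary~\ref{cor:poly2}.
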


\begin{proof}
	The only complete  $(m,n)$-mixed graph with simple chromatic number $k=1$ has a single vertex.
	For $k=2$, the result follows from Corollary \ref{cor:poly2}.
	Otherwise for fixed $k \geq 3$, we proceed by induction on $\nu = V(G)$, noting the claim is true when $\nu = 3$.
	Consider $G$ with $\nu = \ell$ and  $uv \in U(G)$.
	By Theorem \ref{thm:isaClique}, it can be decided in polynomial time if $G$ is a simple clique.
	If $G$ is not a simple clique, then by Corollary \ref{cor:ConvEdges} there exists $u$ and $v$ so that $conv(\{u,v\}) \neq V(G)$.
	As noted in the proof of Corollary \ref{cor:poly2}, $conv(\{u,v\})$ can be constructed in polynomial time.
	Let $G^\prime$ be the $\mn$-mixed graph formed from $G$ by identifying the vertices of $conv(\{u,v\})$ in to a single vertex.
	By Lemma \ref{lem:ConvSame} and Theorem \ref{thm:notComplete}, we have $\chi_s(G) = \chi_s(G^\prime)$.
	The result follows by induction.
\end{proof}

The complexity of deciding if an oriented graph has simple chromatic number at most $k$ (for fixed $k$) is NP-complete for all $k \geq 5$, and polynomial otherwise \cite{S02}. 
As such, the complexity of deciding if an $\mn$-mixed graph with $m \neq 0$ has simple chromatic number at most $k$ (for fixed $k$) is NP-complete when $k \geq 5$.
However, by Theorem \ref{thm:CompletePoly}, this problem is Polynomial when restricted to the class of complete $(m,n)$-mixed graphs.
Given the close relationship between oriented colourings and colourings of $2$-edge-coloured graphs, it is reasonable to expect this decision problem to be NP-complete when $m=0$ and $n \geq 2$. 
This remains to be verified.

%Theorem \ref{thm:notComplete} allows us to fully understand the structure of simple colourings of complete $\mn$-mixed graphs. 
%To fully understand the structure of minimum simple colourings of complete $\mn$-mixed graphs, we require the following corollary to Theorem \ref{thm:notComplete}.\\
%
%\begin{corollary}\label{cor:AlwaysGenerate}
%	If $G$ is a complete $\mn$-mixed graph so that $\chi_s(G) > 2$, then there exists $u,v \in V(G)$ so that $Conv\{u,v\} = V(G)$.\\
%\end{corollary}
%
%\begin{theorem}\label{thm:StuctOpt}
%	Let $G$ be a complete $\mn$-mixed graph so that $\chi_s(G) > 2$,  let $c$ be a minimum simple colouring of $G$ and let $C_i \subseteq V(G)$ be the set of vertices of $G$ in the $i^{th}$ colour class. 
%	For all $1 \leq i \leq \chi_s(G)$,
%	\begin{itemize}
%		\item  if $|C_i| \geq 2$, then either $\chi_s(G[C_i]) = 2$ or  there exists $u,v \in V(G)$ such that $Conv\{u,v\} = C_i$; and
%		\item if $C_i = \{x\}$, then $Conv\{x,y\} = V(G)$ for all $y\neq x \in V(G)$.
%	\end{itemize}
%\end{theorem}
%
%\begin{proof}
%	Let $G$ be a complete $\mn$-mixed graph,  let $c$ be a minimum colouring of $G$ and let $C_i \subseteq V(G)$ be the set of vertices of $G$ in the $i^{th}$ colour class.
%	If $C_i = \{x\}$, then the result follows directly from Corollary \ref{cor:ConvEdges}.
%	If $|C_i| \geq 2$, then the result follows by Corollary \ref{cor:AlwaysGenerate}.
%\end{proof}

We close our discussion on the simple chromatic number of $(m,n)$-mixed graphs by providing constructions for $(m,n)$-mixed simple cliques.
We use these constructions to show that for every $\mn \neq (0,1)$ and every $k \geq 5$ that there is a $\mn$-mixed simple clique with $k$ vertices.
Such a theorem, of course, would be trivial in the study of graph colourings -- the complete graph on $k$ vertices always has chromatic number $k$.
However, the observation that there exist complete $(m,n)$-mixed graphs whose simple chromatic number is strictly less than their order makes the above theorem of interest for simple colourings.
Consider, for example, the family of tournaments on four vertices.
It is easily checked that every tournament on four vertices has simple chromatic number at most $3$.
And so this implies  there are no oriented graphs with simple chromatic number $4$.
We find a similar behaviour amongst the family of $2$-edge-coloured complete graphs on $3$ vertices.
Our result implies  for all $k\neq 3,4$, and all $\mn \neq (0,1)$  there is an $\mn$-mixed graph with chromatic number $k$.

Let $\Gamma$ be an additive group and let $S \subseteq \Gamma$.
The \emph{Cayley digraph} $D(\Gamma,S)$ has $V(D) = \Gamma$ and $uv \in A(D)$ when $u-v \in S$.
If $S$ contains no pair $\{x,-x\}$, then $G(\Gamma,S)$ is necessarily an oriented graph.
The \emph{Cayley graph}, $G(\Gamma,S)$ has $V(G) = \Gamma$ and $uv \in E(G)$ when $u-v \in S$.
For a Cayley graph $G(\Gamma,S)$ we may assume $S$ is closed under the additive inverse.\\

For $x,n \in \mathbb{Z}_{+}$, let $[x]_n \in \{0,1,2,3\dots n-1\}$ be the result of reducing $x$ modulo $n$.\\

\begin{theorem}\label{lem:CayleyDigraphClique}
Let $n \equiv 1 \pmod 2$, $\Gamma = \mathbb{Z}_n$ and  $S = \{2, n-1\} \cup \{x \in V(D) | x \equiv 0 \pmod 4 \}$. The Cayley digraph $D(\Gamma,S)$ is a simple oriented clique for all $n \geq 5$.
\end{theorem}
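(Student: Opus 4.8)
The plan is to invoke the characterization of simple cliques in Theorem \ref{thm:isaClique}: it suffices to prove $conv(\{u,v\}) = V(D)$ for every pair $u,v$. Since the translation $x \mapsto x+c$ is an automorphism of any Cayley digraph, $conv(\{u,v\})$ is a translate of $conv(\{0,v-u\})$, so I would reduce to showing $conv(\{0,d\}) = \mathbb{Z}_n$ for every $d \in \{1,\dots,n-1\}$. I would first record the parity observation that drives the whole argument: because $n$ is odd, every element of $S$ is \emph{even} when viewed as an integer in $\{1,\dots,n-1\}$ (the generator $n-1$ is even, $2$ is even, and the multiples of $4$ are even), whereas each inverse $-s \equiv n-s$ is odd and hence lies outside $S$. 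This simultaneously confirms that $D(\Gamma,S)$ is an oriented graph (no pair $\{x,-x\}$ occurs in $S$) and explains the case split below.

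Recall that in an oriented graph $w$ is between $u$ and $v$ exactly when $u \to w \to v$ or $v \to w \to u$ is a directed path, and that by Theorem \ref{thm:ConvexHullIterative} the convex hull is built by repeatedly adjoining such middle vertices. The engine of the argument is a \emph{propagation lemma}: any convex set containing two cyclically consecutive vertices $\{i,i+1\}$ must equal $\mathbb{Z}_n$. Indeed, since $n-1 \in S$ gives the arc $(i+1)\to(i+2)$ and $2 \in S$ gives the arc $(i+2)\to i$, the vertex $i+2$ lies between $i$ and $i+1$; adjoining it produces the consecutive pair $\{i+1,i+2\}$, and iterating sweeps out all of $\mathbb{Z}_n$.

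It then remains to show that $conv(\{0,d\})$ contains a consecutive pair for every $d$. Here I would use a splitting principle: if $w$ is between $0$ and $d$, then $\{0,w\}$ and $\{w,d\}$ both lie in $conv(\{0,d\})$, so by Corollary \ref{cor:ConvexSubsets} the hull contains $conv(\{0,w\})$ and the translate $conv(\{w,d\})$, reducing to simpler differences. For $d$ even (as an integer in range) I would split using the generators $2$ and the multiples of $4$: a short induction reduces $\{0,d\}$ to the pair $\{0,2\}$, where $0\to 1\to 2$ places $1$ between $0$ and $2$ and yields the consecutive pair $\{0,1\}$. For $d$ odd, direct splitting within $S$ is unavailable since all generators are even; instead I would pass to the complementary difference $n-d$, which is even, and exhibit the vertex $w = d+t$ (with $t \in S$) between $0$ and $d$ whenever $n-d = s+t$ with $s,t \in S$. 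This produces the even-difference pair $\{d,d+t\}$ and reduces $d$ odd to the already-settled even case.

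The main obstacle I anticipate is the bookkeeping of the wraparound arithmetic in $\mathbb{Z}_n$ together with the small residues that the sum-of-two-generators representation misses. A check shows the only even value \emph{not} of the form $s+t$ with $s,t\in S$ is $2$, i.e.\ the single exceptional difference $d = n-2$. I would dispatch this case by hand: taking $w = n-1$, the directed path $(n-2)\to(n-1)\to 0$ shows $n-1$ is between $0$ and $n-2$, and $\{n-2,n-1\}$ is consecutive, so the propagation lemma finishes. Confirming that each intermediate vertex produced is genuinely distinct from the two endpoints (so that it really is a ``between'' vertex) and verifying the base cases for the smallest admissible $n$ are the remaining routine details.
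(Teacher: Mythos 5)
Your proposal is correct, and it shares the paper's skeleton: both arguments invoke Theorem \ref{thm:isaClique} to reduce the claim to $conv(\{u,v\}) = V(D)$, use vertex-transitivity to fix $u = 0$, and run on the same propagation engine (the consecutive triples $\{i,i+1,i+2\}$ induce directed $3$-cycles, so any convex set containing a consecutive pair is all of $\mathbb{Z}_n$; the paper also records the distance-$2$ version, which is why it may exclude $v \in \{\pm 1, \pm 2\}$ at the outset, whereas you instead treat $d = n-2$ explicitly --- equivalent content). Where you genuinely diverge is the production of a between-vertex for an arbitrary pair $\{0,d\}$: the paper splits on $d \bmod 4$ with sub-cases on $n \bmod 4$ and names one explicit witness per case ($4$, $2$, $1$, $n-4$, $n-2$, $n-1$), while you split only on the parity of $d$, reduce even $d$ by a short induction to $\{0,2\}$ using the generators $2$ and the multiples of $4$, and reduce odd $d$ to the even case via the representation $n-d = s+t$ with $s,t \in S$ and witness $w = d+t$, isolating the unique non-representable value $n - d = 2$. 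The paper's route buys brevity (each case is a one-line check), but yours buys uniformity and, in fact, correctness where the paper stumbles: the paper's Case II asserts that vertex $1$ is between $0$ and $v$ for every $v \equiv 2 \pmod 4$, yet when $n \equiv 3 \pmod 4$ vertex $1$ is not even adjacent to such a $v$ (for $n = 11$, $v = 6$ the two differences are $5$ and $6$, neither of which lies in $S = \{2,4,8,10\}$), whereas your even-case split produces the witness $2$ (arcs $2 \to 0$ and $v \to 2$, since $v - 2 \equiv 0 \pmod 4$), which is valid for all odd $n$. The details you flag as routine --- existence of the representation $s+t$ for every even value at least $4$, distinctness of each witness from the endpoints, and the base cases for small $n$ --- all go through as you anticipate.
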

 
\begin{proof}
	By Theorem \ref{thm:isaClique} is suffices to show that $Conv\{u,v\} = V(G)$ for all $u,v \in V(G)$.
	Observe that that vertices $i, [i+1]_n,[i+2]_n$ induce a copy of the oriented $3$-cycle for all $0 \leq i \leq n$.
	Therefore $\{i, [i+1]_n,[i+2]_n\} \subseteq conv(\{i,[i+2]_n\})$ and $\{i, [i+1]_n,[i+2]_n\} \subseteq conv(\{i,[i+1]_n\})$.
	Thus  by Corollary \ref{cor:ConvexSubsets}, if $\{i,[i+1]_n\} \subseteq N$, or $\{i,[i+2]_n\} \subseteq N$, then $conv(N) = V(G)$ for any $1 \leq i \leq n$.
	
	Since $G$ is vertex transitive, it suffices to assume $u = 0$. 
	By the remarks above it suffices to show either $1 \in conv(\{0,v\})$ or $2 \in conv(\{0,v\})$. 
	Further we may assume $v \neq -2,-1,1,2$
	
	\emph{Case I $v \equiv 0 \pmod 4$}:
	If $v\neq 4$, then observe that the vertex $4$ is between $0$ and $v$.
	Therefore $4 \in conv(\{0,v\})$.
	Observe now that the vertex $2$ is between $0$ and $4$.
	Therefore $2 \in  conv(\{0,v\})$.
	By the remarks above and  Corollary \ref{cor:ConvexSubsets}  we have $conv(\{0,v\}) = V(G)$.
	
	\emph{Case II $v \equiv 2 \pmod 4$:}
	Observe that the vertex $1$ is between $0$ and $v$.
	Therefore $1 \in conv(\{0,v\})$.
	By the remarks above and  Corollary \ref{cor:ConvexSubsets}  we have $conv(\{0,v\}) = V(G)$.
	
	\emph{Case III $v \equiv 1 \pmod 4$:}
	If $n \equiv 1 \pmod 4$, and $v \neq n-4$, then
	observe that the vertex $n-4$ is between $0$ and $v$.
	Therefore $n-4 \in conv(\{0,v\})$.
	Observe that the vertex $n-2$ is between $0$ and $n-4$.
	Therefore $n-2 \in conv(\{0,v\})$.
	By the remarks above and  Corollary \ref{cor:ConvexSubsets}  we have $conv(\{0,v\}) = V(G)$.
	
	Otherwise, $n \equiv 3 \pmod 4$.
	Observe that the vertex $n-2$ is between $0$ and $v$.
	Therefore $n-2 \in conv(\{0,v\})$.
	By the remarks above and  Corollary \ref{cor:ConvexSubsets}  we have $conv(\{0,v\}) = V(G)$.

	\emph{Case IV $v \equiv 3 \pmod 4$}:
	If $n \equiv 1 \pmod 4$, then observe that the vertex $n-2$ is between $0$ and $v$
	Therefore $n-2 \in conv(\{0,v\})$.
	By the remarks above and  Corollary \ref{cor:ConvexSubsets}  we have $conv(\{0,v\}) = V(G)$.

	Otherwise, $n \equiv 3 \pmod 4$.
	Observe that the vertex $n-1$ is between $0$ and $v$
	Therefore $n-1 \in conv(\{0,v\})$.
	By the remarks above and  Corollary \ref{cor:ConvexSubsets}  we have $conv(\{0,v\}) = V(G)$.
	
	Therefore $conv(\{0,v\}) = V(G)$ for all $v \in V(G)$.
	
\end{proof}

\begin{lemma}\label{lem:CayleyGraphClique}
	Let  $\Gamma = \mathbb{Z}_n$ for $n \geq 5$ and  $S = \{1\} \cup \{x \in V(G) \;|\; x \equiv 1,2\pmod 4 \}$. The Cayley $2-$edge-coloured graph $G(\Gamma,S\cup -S)$ with $c_E(uv) = 1$ if and only if $|v-u| = 1$  is a simple clique.
\end{lemma}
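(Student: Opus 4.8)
The plan is to mirror the proof of Theorem~\ref{lem:CayleyDigraphClique}. By Theorem~\ref{thm:isaClique} it suffices to show that $conv(\{u,v\}) = V(G)$ for every pair $u,v \in V(G)$, and since $G(\mathbb{Z}_n, S \cup -S)$ is vertex transitive I may assume $u = 0$. Throughout I call $x$ and $y$ \emph{consecutive} when $|x-y| = 1$, so that a consecutive pair is exactly a colour-$1$ edge; I also record the one fact that holds for every $n \geq 5$, namely that $2 \in S$, so each $i$ is joined to $i+2$ by a colour-$2$ edge.

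The first step is to isolate a generating gadget: if $conv(N)$ contains a consecutive pair $\{i,i+1\}$, then $conv(N) = V(G)$. Indeed $i+2$ is between $i$ and $i+1$, since it meets $i$ along the colour-$2$ edge of difference $2$ and meets $i+1$ along the colour-$1$ edge of difference $1$; symmetrically $i-1$ is between $i$ and $i+1$. Hence $i-1, i+2 \in conv(N)$, and iterating this observation (together with Corollary~\ref{cor:ConvexSubsets}) sweeps out all of $\mathbb{Z}_n$. This reduces the lemma to the single task of producing \emph{one} consecutive pair inside $conv(\{0,v\})$ for each $v \neq 0$.

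The second step disposes of the generic $v$ cheaply. If $v-1 \in S \cup -S$ and $v \neq 2$, then the vertex $1$ is between $0$ and $v$: the edge $01$ has colour $1$, whereas the edge $1v$ has difference $v-1 \neq \pm 1$ and therefore colour $2$. Thus $1 \in conv(\{0,v\})$ and $\{0,1\}$ is the required consecutive pair; the mirror argument, using the vertex $-1$, succeeds whenever $v+1 \in S \cup -S$ and $v \neq -2$. Here I would carry out a residue analysis modulo $4$ to decide exactly which differences are adjacencies: the set $S$ supplies the small differences that are $\equiv 1,2 \pmod 4$, while $-S$ supplies the large differences whose behaviour is pinned down by $n \bmod 4$. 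This analysis shows that for all but a bounded list of residue classes of $v$ at least one of $v\pm 1$ is adjacent to $0$, so the cheap step already settles the overwhelming majority of vertices.

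The remaining vertices are the obstacle, and they are precisely where the argument becomes $n$-dependent. When neither $v-1$ nor $v+1$ is adjacent to $0$, or when $v \in \{2,-2\}$ (so that the colour-$1$ coincidence blocks the cheap step), I must instead build an explicit chain of betweenness relations from $\{0,v\}$ down to a consecutive pair, stepping along differences $1$ and $2$. Since adjacency at large differences is governed by $-S$, the availability of each intermediate vertex depends on $n \bmod 4$, and the chains must therefore be split into cases according to the residues of $v$ and of $n$ modulo $4$, in direct parallel with Cases~I--IV of Theorem~\ref{lem:CayleyDigraphClique}; the small values of $n$ should be checked separately. Verifying that some such chain terminates in a consecutive pair in every surviving case is the main technical burden — I expect this modular bookkeeping, rather than any conceptual difficulty, to be the crux — after which the generating gadget of the first step completes the proof.
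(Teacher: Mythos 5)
Your skeleton matches the paper's proof: reduce to showing $conv(\{u,v\}) = V(G)$ via Theorem \ref{thm:isaClique}, fix $u=0$ by vertex transitivity, and manufacture one colour-$1$ (consecutive) pair inside $conv(\{0,v\})$. Your consecutive-pair gadget is correct, and is in fact a cleaner form of the paper's unproved observation that four consecutive vertices induce a simple clique; your ``cheap step'' is the paper's claim that $v-1$ or $v+1$ is between $0$ and $v$, in mirrored form (the vertex $1$, resp.\ $-1$, is between $0$ and $v$ exactly when $v-1 \in S\cup -S$, resp.\ $v+1 \in S \cup -S$, and no colour-$1$ coincidence occurs --- the same adjacency conditions the paper uses). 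The residue analysis you defer is also short: for $2 \le v \le n-2$ the representatives $v-1$ and $v+1$ differ by $2$, so one of them is $\equiv 1$ or $2 \pmod 4$ and hence lies in $S$; thus the cheap step settles every $v$ except $v = \pm 1$ (already consecutive pairs) and the coincidence cases $v = \pm 2$.

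The genuine gap is the part you explicitly postpone --- the betweenness chains for the surviving cases --- and it cannot be closed in the stated generality, because for $v = 2$ there is no chain at all when $n \equiv 2,3 \pmod 4$. Any vertex between $0$ and $2$ must meet $0$ or $2$ in a colour-$1$ edge (two colour-$2$ edges agree), so it must be one of $1$, $-1$, $3$; but $0$ and $2$ agree on $1$, while $-1$ and $3$ can only serve if the difference $3$ is an adjacency, i.e.\ $3 \in S \cup -S$, which holds iff $n-3 \equiv 1,2 \pmod 4$, i.e.\ $n \equiv 0,1 \pmod 4$. Hence for $n \equiv 2,3 \pmod 4$ we get $conv(\{0,2\}) = \{0,2\} \neq V(G)$, and by Theorem \ref{thm:isaClique} the graph is not a simple clique: the lemma as printed is false for such $n$, so no amount of the ``modular bookkeeping'' you anticipate can complete the deferred step. (The paper's own proof conceals the same defect: its window observation fails for exactly the pair $\{i,i+2\}$ when $n \equiv 2,3 \pmod 4$.) If the lemma is restricted to $n \equiv 0,1 \pmod 4$, your argument does become a complete and correct proof, since then your own mirror step closes the two remaining cases: $-1$ is between $0$ and $2$, and $1$ is between $0$ and $-2$.
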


\begin{proof}
		By Theorem \ref{thm:isaClique} is suffices to show that $Conv\{u,v\} = V(G)$ for all $u,v \in V(G)$.
		Observe the vertices $i, [i+1]_n,[i+2]_n,[i+3]_n$ induce a  $2$-edge-coloured simple clique for all $0 \leq i \leq n$.
		Let $C_i = \{i, [i+1]_n,[i+2]_n,[i+3]_n\}$ for $1 \leq i \leq n$.
		Thus if for any $u,v \in V(G)$, if any $2-$element subset of $C_i$ (for a fixed $1 \leq i \leq n$) is a subset of $Conv\{u,v\}$, we have $Conv\{u,v\} = V(G)$.
		
		Observe the $2$-edge-coloured Cayley graph $G(\Gamma,S)$  is vertex transitive.
		Thus it suffices to assume $u = 0$.
		By the previous remarks we may assume $v \neq -3,-2,-1,1,2,3$.
		Observe that either $v+1$ or $v-1$ is between  $0$ and $v$ for all $5 \leq v \leq n-5$.
		Therefore  $\{v,v+1\} \subset conv(\{0,v\})$ or $\{v,v-1\} \subset conv(\{0,v\})$.
		The result follows from the previous remarks and Corollary \ref{cor:ConvexSubsets}.
\end{proof}

For $n \geq 3$, let $H_n$ be the $2$-edge-coloured graph with vertex set
$V(H_n) = \{x_0,x_1,\dots x_{n-1}\} \cup \{y_0,y_1,\dots y_{n-1}\}$ constructed from $K_{n,n}$ by adding edges so that each of  $x_0,x_1,\dots x_{n-1}$ and $y_0,y_1,\dots y_{n-1}$ are cycles.
We complete the construction of $H_n$ by letting $c_E(e) = 1$ when  $e = x_ix_{[i+1]_n}$, $e = y_iy_{[i+1]_n}$ ($0 \leq i \leq n-1$)
or $e = x_jy_j$ ($0 \leq j \leq n-1$).
Otherwise we let $c_E(e) = 2$.\\

\begin{lemma}\label{lem:HnClique}
	$H_n$ is a simple $2-$edge-coloured simple clique for all $n \geq 3$.
\end{lemma}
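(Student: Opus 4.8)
The plan is to invoke Theorem \ref{thm:isaClique}: it suffices to prove that $conv(\{u,v\}) = V(H_n)$ for every pair $u,v \in V(H_n)$. First I would record the two colour-preserving automorphisms of $H_n$, the rotation $\rho \colon x_i \mapsto x_{[i+1]_n},\ y_i \mapsto y_{[i+1]_n}$ and the swap $\sigma \colon x_i \leftrightarrow y_i$. Since convex hulls are equivariant under automorphisms, these reduce the verification to two representative pairs: $\{x_0, x_j\}$ (which, via $\sigma$, also subsumes the case of two $y$-vertices) and $\{x_0, y_j\}$.

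The engine of the argument is a single claim: if $conv(\{u,v\})$ contains a \emph{matched pair} $\{x_i, y_i\}$, then $conv(\{u,v\}) = V(H_n)$. To see this, observe that each of $x_{[i\pm1]_n}$ and $y_{[i\pm1]_n}$ is a common neighbour of $x_i$ and $y_i$ whose two incident edges have different colours (the cycle edge has colour $1$, the crossing bipartite edge colour $2$), so each lies between $x_i$ and $y_i$ and is added to the hull. In particular $\{x_{[i+1]_n}, y_{[i+1]_n}\}$ is again a matched pair, so the set of indices $k$ with $\{x_k,y_k\}\subseteq conv(\{u,v\})$ is closed under $\pm 1$; as $\mathbb{Z}_n$ is connected under $\pm1$, this set is all of $\mathbb{Z}_n$, yielding every vertex.

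It then remains to produce a matched pair in each case. For $\{x_0, x_j\}$ with $j \neq 0$, the vertex $y_0$ is between $x_0$ and $x_j$, since $x_0 y_0$ has colour $1$ while $x_j y_0$ has colour $2$; thus $\{x_0,y_0\}$ is a matched pair and the claim finishes the case. For $\{x_0, y_j\}$ with $j = 0$ we already have the matched pair $\{x_0,y_0\}$. For $j \neq 0$ I would spread along each cycle while fixing the opposite-side partner: anchoring on $y_j$, the vertex $x_{[k+1]_n}$ (resp. $x_{[k-1]_n}$) is between $x_k$ and $y_j$ precisely when its index differs from $j$, so iterating around the cycle (using Theorem \ref{thm:ConvexHullIterative} and Corollary \ref{cor:ConvexSubsets}) adds every $x_k$ with $k \neq j$; symmetrically, anchoring on $x_0$ adds every $y_k$ with $k \neq 0$. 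Since $n \geq 3$ there is an index $i \notin \{0,j\}$, giving the matched pair $\{x_i,y_i\}$, and the claim again finishes.

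The main obstacle I anticipate is the bookkeeping in the spreading step of the second case: at each step one must confirm both the adjacency type (cycle versus bipartite) and the colour mismatch, and in particular that the propagation halts exactly at the single agreeing index ($x_j$ on the $x$-cycle, $y_0$ on the $y$-cycle), so that one stops just short of those vertices yet still collects enough to guarantee a matched pair. The extreme values $j = 1$ and $j = n-1$, where one direction of the sweep is blocked immediately, are the one place requiring a little care; in every case the two directions together still sweep out all but the single blocked index on each cycle, which is all that the claim needs.
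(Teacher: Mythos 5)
Your proof is correct, and it takes a genuinely different route from the paper's. Both arguments invoke Theorem \ref{thm:isaClique} and both turn on the matched pairs $\{x_i,y_i\}$, but the engines differ. The paper argues by induction on $n$: its key case is the matched pair $u=x_{n-1}$, $v=y_{n-1}$, where it places $\{x_{n-2},y_{n-2}\}$ in the hull and then appeals to the inductive hypothesis on the graph obtained by deleting $x_{n-1},y_{n-1}$; the remaining cases each reach a matched pair by exhibiting a single between-vertex. Your argument never leaves $H_n$: the propagation claim (a matched pair forces $x_{[i\pm 1]_n}$ and $y_{[i\pm 1]_n}$ into the hull, so the set of matched indices is closed under $\pm 1$ and hence equals $\mathbb{Z}_n$) replaces the induction, and the explicit automorphisms $\rho,\sigma$ replace the paper's informal appeal to symmetry. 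Your version buys real robustness. First, the paper's induction tacitly identifies the subgraph of $H_n$ induced by deleting $x_{n-1},y_{n-1}$ with $H_{n-1}$, which is not literally true for $n\geq 4$ (deletion turns the two cycles into paths, losing the edges $x_0x_{n-2}$ and $y_0y_{n-2}$ that $H_{n-1}$ possesses); your propagation needs no such identification. Second, the paper's Case IV ($u=x_{n-1}$, $v=y_j$) claims $x_{n-2}$ is a between-vertex, which fails when $j=n-2$ (both incident edges then have colour $1$); your two-directional sweep in the mixed case $\{x_0,y_j\}$, which collects every $x_k$ with $k\neq j$ and every $y_k$ with $k\neq 0$ before extracting a matched pair with index outside $\{0,j\}$, handles exactly such configurations uniformly, including the blocked-direction cases $j=1$ and $j=n-1$ that you flag. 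The price is a somewhat longer write-up, but every step reduces to a one-line colour check.
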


\begin{proof}
By Theorem \ref{thm:isaClique}, it suffices to check that $conv(\{u,v\}) = V(H_n)$.
Let $X = \{x_0, x_1, \dots, x_{n-1}\}$ and $Y = \{y_0, y_1, \dots, y_{n-1}\}$.
We proceed by induction on $n$, noting the claim is true by inspection when $n = 3$.
By symmetry we may assume $u = x_{n-1}$.
We proceed in cases.

\emph{Case I: $u = x_{n-1}, v = y_{n-1}$.} 
By construction, $x_{n-1}$ and $y_{n-1}$ disagree on $x_{n-2}$ and $y_{n-2}$. 
Therefore $\{x_{n-2},y_{n-2}\} \subset conv(\{u,v\})$.
By induction, $conv(\{x_{n-2},y_{n-2}\}) = X \cup Y - \{x_{n-1}, y_{n-1}\}$.
Therefore $conv(\{u,v\}) = V(G_n)$.

\emph{Case II: $u = x_{n-1}, v = x_{n-2}$}. 
The vertices $x_{n-1}$ and $x_{n-2}$ disagree on $y_{n-1}$.
The result follows by Case I.

\emph{Case III: $u = x_{n-1}, v = x_j$, $(1 \leq j < n-1$ and $j \neq n-1)$}. 
The vertices $x_{n-1}$ and $x_{j}$ disagree on $y_{n-1}$.
The result follows by Case I.

\emph{Case IV: $u = x_{n-1}, v = y_j$, $(1 \leq i,j \leq n-1$ and $j \neq n-1)$}. 
The vertices $x_{n-1}$ and $x_{j}$ disagree on $x_{n-2}$.
The result follows by Case III.
\end{proof}

For $n\geq 3$, let $G_n$ be the oriented graph with vertex set
$V(G_n) = \{x_0,x_1,\dots x_{n-1}\} \cup \{y_0,y_1,\dots y_{n-1}\}$ constructed from $K_{n,n}$ by adding arcs so that each of  $x_0,x_1,\dots x_{n-1}$ and $y_0,y_1,\dots y_{n-1}$ are directed cycles and orienting the edges so that $x_iy_j \in A(G_n)$ for $1 \leq i \leq j \leq n$ and $y_j x_i \in A(G_n)$ for $1 \leq j < i \leq n$.\\

\begin{lemma} \label{lem:GnClique}
	$G_n$ is a simple oriented clique for all $n \geq 3$.
\end{lemma}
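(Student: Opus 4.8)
The plan is to mirror the proof of Lemma \ref{lem:HnClique} in the oriented setting. By Theorem \ref{thm:isaClique} it suffices to show $conv(\{u,v\}) = V(G_n)$ for every pair $u,v \in V(G_n)$, and I would argue this by induction on $n$, taking $n = 3$ as the base case (here $G_3$ is small enough that one verifies by inspection that the convex hull of every pair is all six vertices). Writing $X = \{x_0,\dots,x_{n-1}\}$ and $Y = \{y_0,\dots,y_{n-1}\}$, the engine of the induction is the same as for $H_n$: to grow the hull of a chosen pair I would exhibit two vertices on which $u$ and $v$ \emph{disagree} (so each is between $u$ and $v$ and hence enters the hull), and then feed that pair to the inductive hypothesis together with Corollary \ref{cor:ConvexSubsets}.

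Concretely, I would organize the argument into the same four cases as Lemma \ref{lem:HnClique}, according to whether $u$ and $v$ lie in the same part and whether their indices are cyclically adjacent: (I) $u = x_{n-1},\, v = y_{n-1}$; (II) $u = x_{n-1},\, v = x_{n-2}$; (III) $u = x_{n-1},\, v = x_j$ for other $j$; (IV) $u = x_{n-1},\, v = y_j$ for $j \ne n-1$. In each case the witnesses are read off from the orientation rule $x_i \to y_j \iff i \le j$ together with the two directed cycles. For instance, in Case I one checks that $x_{n-1}$ and $y_{n-1}$ disagree on $x_0$ (since $x_{n-1} \to x_0$ while $x_0 \to y_{n-1}$) and on $y_0$ (since $y_0 \to x_{n-1}$ while $y_{n-1} \to y_0$), so that $\{x_0, y_0\} \subseteq conv(\{u,v\})$; the later cases reduce to earlier ones exactly as in Lemma \ref{lem:HnClique}. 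The configurations with $u \in Y$, or with $v$ of larger index, are handled by symmetric computations.

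The step I expect to be the main obstacle is the induction itself, because deleting a vertex from a directed cycle destroys it: the subgraph of $G_n$ induced on $X \cup Y \setminus \{x_{n-1}, y_{n-1}\}$ is not a copy of $G_{n-1}$ but rather $G_{n-1}$ with its two wrap-around arcs $x_{n-2} \to x_0$ and $y_{n-2} \to y_0$ deleted, so the inductive hypothesis does not apply verbatim. I would resolve this either by (a) strengthening the inductive statement so that it still yields a full hull after those two arcs are removed, or (b) verifying directly that the disagreement witnesses invoked at each stage never rely on the missing wrap-around arcs. A secondary difficulty, absent in the undirected graph $H_n$, is that $G_n$ is neither vertex-transitive nor regular — the in- and out-degrees of $x_i$ depend on $i$ — so the symmetry reductions are weaker than for $H_n$ and the orientations must be tracked explicitly; in particular the diagonal pairs $x_i, y_i$ behave differently from off-diagonal pairs and require separate attention. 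Carefully bookkeeping these orientations in each case, rather than any single clever idea, is the crux of the proof.
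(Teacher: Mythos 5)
Your proposal takes exactly the paper's route: the paper's entire proof of Lemma \ref{lem:GnClique} is the single sentence ``This proof follows similarly to that of Lemma \ref{lem:HnClique},'' and your plan --- reduce to showing $conv(\{u,v\}) = V(G_n)$ via Theorem \ref{thm:isaClique}, induct on $n$, and run the same four-case analysis with disagreement witnesses --- is precisely that expansion, with the witnesses correctly adapted to the oriented setting (e.g.\ in Case I the literal $H_n$ witnesses $x_{n-2},y_{n-2}$ do not work, since both arcs point away from $x_{n-2}$ so the endpoints \emph{agree} on it, and your choice of $x_0,y_0$ is the right replacement). The induction obstacle you flag is genuine --- the subgraph induced on $X \cup Y \setminus \{x_{n-1},y_{n-1}\}$ is $G_{n-1}$ minus its two wrap-around arcs, not $G_{n-1}$, and convex hulls computed in a subgraph can only shrink --- but this very gap is equally present and equally unaddressed in the paper's own proof of Lemma \ref{lem:HnClique}, so your attempt is, if anything, more careful than the source.
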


\begin{proof}
This proof follows similarly to that of Lemma \ref{lem:HnClique}.
\end{proof}

\begin{theorem}
	For every $\mn \neq (0,1)$ and every $k \geq 5$ that there is an $\mn$-mixed clique with $k$ vertices.
\end{theorem}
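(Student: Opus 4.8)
The plan is to reduce the statement to the two base constructions already in hand and then promote them to an arbitrary type $\mn$. (The intended object here, as in the lemmas above, is the $\mn$-mixed \emph{simple} clique.) The first step is to record the monotonicity principle noted earlier in the paper: every $(m,n)$-mixed graph is also an $(m',n')$-mixed graph whenever $m' \geq m$ and $n' \geq n$. The relations \emph{agree} and \emph{between}, and hence the operator $conv$, depend only on the adjacencies and colours actually present in $G$, not on the size of the available palette. Since, by Theorem \ref{thm:isaClique}, being a simple clique is exactly the condition $conv(\{u,v\}) = V(G)$ for all $u,v$, this condition is unchanged by enlarging $(m,n)$. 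Consequently a $(1,0)$-mixed simple clique is an $\mn$-mixed simple clique for every $m \geq 1$, and a $(0,2)$-mixed simple clique is an $\mn$-mixed simple clique for every $n \geq 2$.

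Given this, I would split into two cases according to whether $m \geq 1$ or $m = 0$. When $m \geq 1$ the target family is the oriented cliques: Theorem \ref{lem:CayleyDigraphClique} supplies a simple oriented clique on $k$ vertices for every odd $k \geq 5$, while Lemma \ref{lem:GnClique} supplies, for every even $k = 2\ell \geq 6$ (so $\ell \geq 3$), the simple oriented clique $G_\ell$ on $k$ vertices. The union of the odd values $\{5,7,9,\dots\}$ and the even values $\{6,8,10,\dots\}$ is exactly the set of integers at least $5$, so for $m \geq 1$ we obtain an oriented --- hence $\mn$-mixed --- simple clique of every order $k \geq 5$. When $m = 0$ the hypothesis $\mn \neq (0,1)$ forces $n \geq 2$, and Lemma \ref{lem:CayleyGraphClique} already produces a simple $2$-edge-coloured clique on $k$ vertices for every $k \geq 5$ directly; by the monotonicity principle this is an $\mn$-mixed simple clique for every $n \geq 2$ (Lemma \ref{lem:HnClique} furnishes an alternative in the even orders). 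These two cases exhaust all $\mn \neq (0,1)$.

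The only genuine content beyond assembling the cited lemmas is the monotonicity step, and the point to verify carefully is that the between-relation is intrinsic to $G$. When we regard an oriented graph as an $\mn$-mixed graph we use only arc-colour $1$ and no edges, so no pair of vertices that disagreed can come to agree, nor conversely; thus $conv$ is literally the same set operator, and the simple-clique criterion of Theorem \ref{thm:isaClique} transfers verbatim. I expect the parity bookkeeping --- confirming that the odd construction of Theorem \ref{lem:CayleyDigraphClique} and the even construction of Lemma \ref{lem:GnClique} together leave no gap above $5$ --- to be the only place where a careless off-by-one could intrude; everything else is immediate. It is also worth remarking why $(0,1)$ must be excluded: in a graph with a single edge type no vertex is ever between two others, so $conv(\{u,v\}) = \{u,v\}$ and no simple clique on more than two vertices can exist.
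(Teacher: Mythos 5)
Your proposal is correct and takes essentially the same route as the paper: the paper's own proof simply observes that an oriented graph is an $\mn$-mixed graph for all $m \geq 1$ and a $2$-edge-coloured graph is an $\mn$-mixed graph for all $n \geq 2$, then cites Theorem \ref{lem:CayleyDigraphClique} (odd orders), Lemma \ref{lem:GnClique} (even orders), and Lemma \ref{lem:CayleyGraphClique}. Your write-up merely makes explicit what the paper leaves implicit --- the justification that the between-relation and $conv$ are intrinsic to $G$ under palette enlargement, the odd/even parity bookkeeping, and the correct reading of the statement as concerning \emph{simple} cliques --- all of which are accurate.
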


\begin{proof}
	Observe that an oriented graph is an $\mn$-mixed graph for all $m \geq  1$  and that a $2$-edge-coloured graph is an $\mn$-mixed graph for all $n \geq 2$.
	The result now follows from  Lemmas \ref{lem:CayleyDigraphClique}, \ref{lem:CayleyGraphClique} and \ref{lem:GnClique}.
\end{proof}

In \cite{MNP75} the authors show that almost every tournament is a simple oriented clique.
In many cases, the constructions above give  oriented or $2$-edge-coloured simple cliques whose underlying graphs are not complete.
In \cite{BDS17} the authors show for every $\mn \neq (0,1)$ that almost every $\mn$-mixed graph is an $\mn$-mixed clique.
And so we conjecture the following. \\

\begin{conjecture}
	For every $\mn \neq (0,1)$, almost every $\mn$-mixed graph is a simple $\mn$-mixed clique.
\end{conjecture}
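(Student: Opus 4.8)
The plan is to interpret ``almost every'' in the usual probabilistic sense and work with the uniform random model: let $G$ be a uniformly random $\mn$-mixed graph on the labelled vertex set $\{1,\dots,N\}$, so that the adjacency of each pair is chosen independently and uniformly among the $1+n+2m$ possible states (non-adjacent, one of $n$ edge colours, or one of $m$ arc colours in one of two orientations). The goal is to show $P(G \text{ is a simple } \mn\text{-mixed clique}) \to 1$ as $N \to \infty$. By Theorem \ref{thm:isaClique} it suffices that, with high probability, $conv(\{u,v\}) = V(G)$ for every pair $u,v$, and since there are $\binom{N}{2}$ pairs, a union bound reduces the whole statement to proving a per-pair failure probability of $e^{-\Omega(N)}$.

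First I would saturate the hull in two phases, using the iterative description of $conv$ in Theorem \ref{thm:ConvexHullIterative}. Fix $u,v$. A vertex $w$ is between $u$ and $v$ exactly when it is adjacent to both and the two adjacencies carry different ``labels at $w$'' (an edge colour, an in-arc colour, or an out-arc colour). A short count gives that this occurs with a fixed probability $p = \frac{(n+2m)(n+2m-1)}{(1+n+2m)^2}$, which is strictly positive precisely because $\mn \neq (0,1)$ forces $n+2m \geq 2$. The events for distinct $w$ depend on disjoint pairs, hence are independent, so the size of the first level $B_0$ is a $\mathrm{Bin}(N-2,p)$ variable; a Chernoff bound then yields $|B_0| \geq \tfrac{p}{2}N$ except with probability $e^{-\Omega(N)}$.

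The second phase is the heart of the matter. I would reveal only the pairs incident to $u$ or $v$, which already determines $B_0$ and leaves every remaining pair fresh and uniform. For a vertex $w \notin \{u,v\}\cup B_0$, note that $w$ fails to lie between some pair of $B_0$ if and only if the neighbours of $w$ inside $B_0$ use at most one label. Since each connection from $w$ to a vertex of $B_0$ is independent and uniform, the probability that all such connections avoid a second label is at most $(n+2m)(2q)^{|B_0|}$ with $q = \frac{1}{1+n+2m}$; here $2q < 1$ exactly when $n+2m \geq 2$, so this is $e^{-\Omega(N)}$. A union bound over the at most $N$ candidates $w$ shows that, conditioned on $|B_0| \geq \tfrac{p}{2}N$, every remaining vertex lies between some pair of $B_0$ and so enters the hull at the next iteration, giving $conv(\{u,v\}) = V(G)$. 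Combining the two phases bounds the per-pair failure probability by $e^{-\Omega(N)}$, and the outer union bound over pairs completes the proof.

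The hard part will be making the independence in the second phase airtight: $B_0$ is itself random, so one must justify that after exposing exactly the pairs incident to $u$ or $v$, the connections from an external $w$ to $B_0$ remain genuinely i.i.d.\ and uniform (they are, because those pairs are disjoint from the exposed ones, but this needs to be stated carefully). A secondary point deserving care is the precise formalisation of ``almost every'' and the check that the excluded case is genuinely degenerate: when $\mn=(0,1)$ there is a single edge colour, every two neighbours of any vertex agree, no vertex is ever between two others, and the scheme collapses (indeed $2q=1$), which is exactly why that case is omitted. Should the independent-pair model prove awkward to push through the union bound, an alternative would be to mirror the clique argument of \cite{BDS17}, replacing its single ``between-or-adjacent'' step with the two-phase hull-saturation above.
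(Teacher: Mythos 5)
This statement is a \emph{conjecture} in the paper: the authors give no proof, offering only the motivating facts that almost every tournament is a simple oriented clique \cite{MNP75} and that almost every $\mn$-mixed graph is an $\mn$-mixed (not necessarily simple) clique \cite{BDS17}. So there is no proof of the paper's to compare yours against; if your sketch can be made rigorous, you would be settling an open problem, not reproving a result, and note that your target is genuinely stronger than the result of \cite{BDS17}, since a simple clique must have $conv(\{u,v\}) = V(G)$ even for adjacent pairs (Theorem \ref{thm:isaClique}), whereas a clique only needs each pair to be adjacent or to have a vertex between them (Theorem \ref{thm:BDS}).

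On its merits, I see no gap in your outline. The model is the right formalisation of ``almost every'' (uniform choice among the $1+n+2m$ states per pair is exactly uniform choice of a labelled $\mn$-mixed graph); the value $p = (n+2m)(n+2m-1)/(1+n+2m)^2$ is correct, since $w$ lies between $u$ and $v$ precisely when both pairs are adjacent and the two adjacency types as seen at $w$ (one of $n+2m$ possibilities: $n$ edge colours, $m$ in-arc colours, $m$ out-arc colours) differ; and $p > 0$, $2q < 1$ hold exactly when $n+2m \geq 2$, i.e.\ for all $\mn \neq (0,1)$ except the vacuous $(0,0)$, which you should exclude explicitly. The two-round exposure is legitimate for the reason you state: the events defining $B_0$ involve only pairs meeting $\{u,v\}$, while the second phase examines only pairs $\{w,b\}$ disjoint from those, so conditioning on any realisation $B_0 = S$ with $|S| \geq pN/2$ leaves the latter i.i.d.\ uniform, and $(n+2m)(2q)^{|S|}$ is a correct union bound over the $n+2m$ types for the event that the neighbours of $w$ in $S$ realise at most one type at $w$. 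Your second phase then shows every vertex outside $N_1^{\{u,v\}} = \{u,v\} \cup B_0$ enters the hull at the next iteration of the construction in Theorem \ref{thm:ConvexHullIterative}, so $conv(\{u,v\}) = V(G)$, and the exponential per-pair bound survives the union bounds over the at most $N$ vertices $w$ and the $\binom{N}{2}$ pairs. When writing this up, state the conditional failure bound uniformly over all admissible sets $S$ (not just in expectation over $B_0$), and adjust the Chernoff step to $\mathrm{Bin}(N-2,p)$ rather than $\mathrm{Bin}(N,p)$; both are cosmetic. A final remark worth making explicit: your argument shows the proportion of $N$-vertex $\mn$-mixed graphs failing to be simple cliques is $e^{-\Omega(N)}$, which is stronger than the conjectured statement.
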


\section{Optimally Simply Colourable Families}\label{sec:minimum}
Recall that a family of $(m,n)$-mixed graphs $\mathcal{F}$ is \emph{optimally simply colourable} when $\chi_s(\mathcal{F}) = \chi(\mathcal{F})$.
One can see that any family $\mathcal{F}$ with $\chi_s(\mathcal{F}) = \infty$ is optimally simply colourable.
This is the case for orientations of bipartite graphs \cite{S02}.
However, there are also examples of optimally simply colourable families with finite chromatic number.
We show this is the case for $\mathcal{T}_k$, the family of $\mn$-mixed graphs whose underlying graphs are partial $k$-trees, for fixed $k \geq 3$, and for $\mathcal{P}$, the family of $\mn$-mixed graphs whose underlying graphs are planar.
In each case we show that there exists a $\mn$-mixed graph whose oriented chromatic number is maximum across all $\mn$-mixed graphs in the family, but that has the same simple and oriented chromatic number. 

Let $\mathcal{H}$ be the family of $(1,0), (2,0)$ and $(0,2)$ mixed graphs given in Figure \ref{fig:planarFig} together with those formed from reversing the orientation of every arc or swapping edge/arc colour $1$ with edge/arc colour $2$.
Ornamentation on the arc/edge indicates arc/edge colour differences.
\\

\begin{figure}
	\begin{center}
		\includegraphics[width = .75\linewidth]{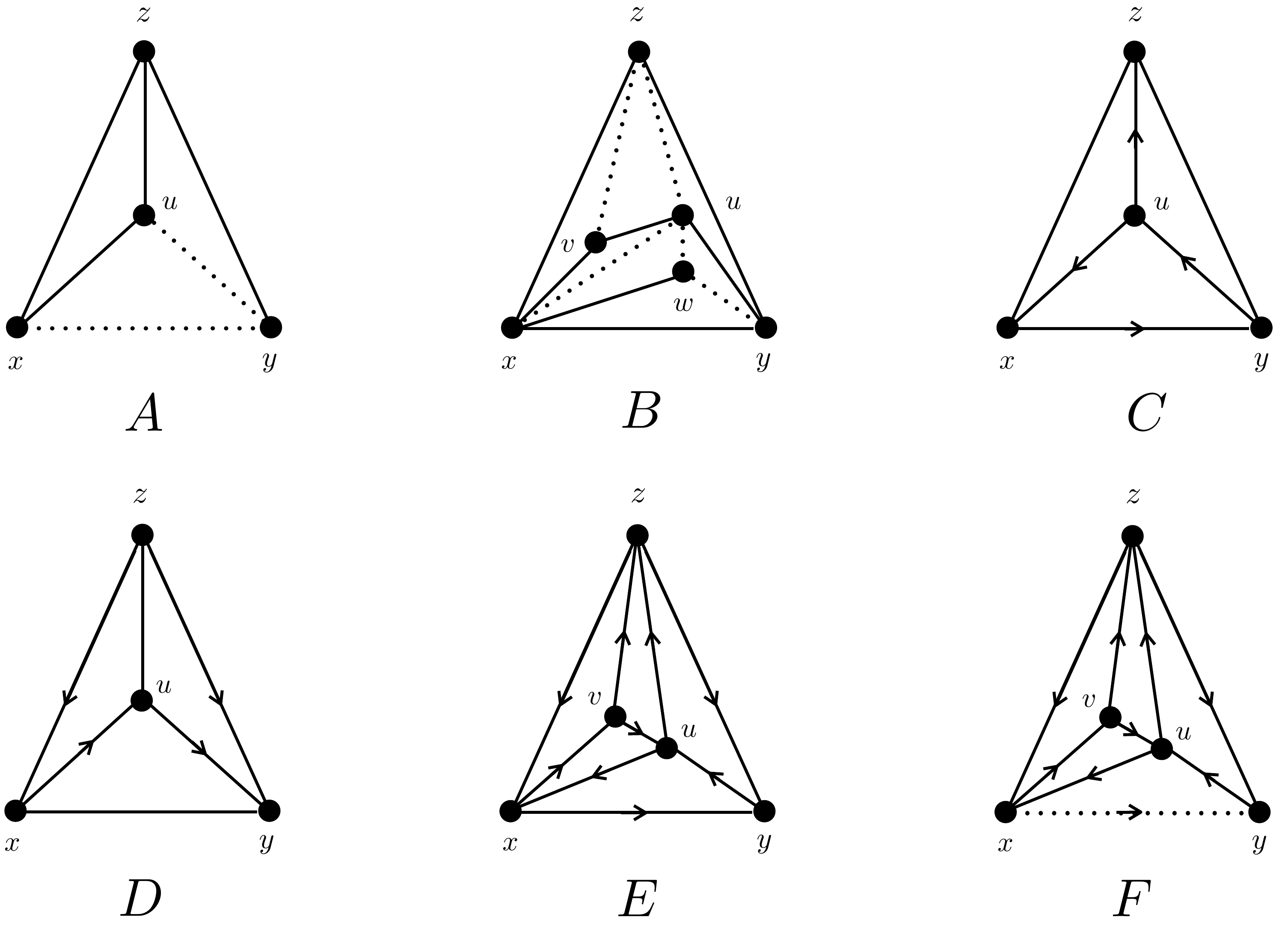}
	\end{center}
	\caption{The family $\mathcal{H}$ in the proof of Lemma \ref{lem:1or3}.}
	\label{fig:planarFig}
\end{figure}

\begin{lemma}\label{lem:1or3}
	Let $G$ be an $\mn$-mixed graph that contains $H$, a member of $\mathcal{H}$. For every simple colouring $\phi$ of $G$, the set $\{\phi(x), \phi(y), \phi(z)\}$ has cardinality either $1$ or $3$.
\end{lemma}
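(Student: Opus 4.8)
The plan is to reduce the statement to a purely structural property of the members of $\mathcal{H}$: that in each $H \in \mathcal{H}$ each of the three distinguished vertices lies in the convex hull of the other two, that is, $z \in conv(\{x,y\})$, $y \in conv(\{x,z\})$ and $x \in conv(\{y,z\})$, where these hulls are computed inside $H$. Once this is established the lemma follows quickly. Since $H$ is a subgraph of $G$ with the same adjacency types, every betweenness relation of $H$ persists in $G$, so by the iterative description of the convex hull (Theorem \ref{thm:ConvexHullIterative}) together with Corollary \ref{cor:ConvexSubsets} each hull membership also holds in $G$. Now suppose $\phi$ is a simple colouring with $|\{\phi(x),\phi(y),\phi(z)\}| = 2$. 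Then exactly one of the three pairs shares a colour; say $\phi(x) = \phi(y) = i \neq \phi(z)$, the other two cases being identical by symmetry. Applying Lemma \ref{lem:ConvSame} to $N = \{x,y\}$ and the membership $z \in conv(\{x,y\})$ forces $\phi(z) = i$, a contradiction. Hence the cardinality is $1$ or $3$.

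It remains to verify the three hull memberships for each $H \in \mathcal{H}$. For the $(1,0)$ and $(2,0)$ members, whose underlying graph is a triangle on $\{x,y,z\}$, I would check directly from the definition of \emph{between} that each vertex is between the other two; then each pair's hull already contains the third vertex. For instance, in the directed triangle $x \to y \to z \to x$, at $z$ the two incident arcs $z \to x$ and $y \to z$ point in opposite directions, so $x$ and $y$ do not agree on $z$ and $z$ is between them; the remaining relations follow by the cyclic symmetry of the gadget. The $(2,0)$ member is handled the same way, using arc colours and orientations to witness disagreement at each vertex. Where a gadget admits a symmetry permuting $x,y,z$, a single verification together with that symmetry yields all three memberships.

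The $(0,2)$ member requires more care, and I expect it to be the main obstacle. A triangle on $\{x,y,z\}$ cannot work here: the three betweenness conditions would require $c_E(xy)$, $c_E(yz)$, $c_E(xz)$ to form a proper edge-colouring of $K_3$, which is impossible with two colours. Consequently this gadget must carry auxiliary vertices, and the hull memberships cannot be read off from single-step betweenness. Instead I would run the iterative construction of $conv(\{x,y\})$ from Theorem \ref{thm:ConvexHullIterative}: begin with the pair, repeatedly adjoin every vertex that is between two already-present vertices, and verify by inspection of Figure \ref{fig:planarFig} that the process eventually absorbs $z$, and, by the analogous computations and the symmetry of the gadget, that $conv(\{x,z\}) \ni y$ and $conv(\{y,z\}) \ni x$. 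Corollary \ref{cor:ConvexSubsets} guarantees monotonicity of these hulls, so it suffices to exhibit, at each stage, one new between-vertex extending a chain of betweenness from the pair to the target vertex.

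Finally I would dispose of the closure operations defining $\mathcal{H}$. Reversing every arc sends arcs into a vertex to arcs out of it uniformly and fixes edge data, so it preserves the relation ``$u$ and $w$ agree on $v$'', and hence preserves \emph{between} and every convex hull; swapping the two colours globally likewise preserves colour-equality and therefore betweenness. Thus the hull memberships verified for the base gadgets transfer verbatim to all of their images under these operations, and the conclusion holds for every member of $\mathcal{H}$.
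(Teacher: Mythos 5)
Your reduction is sound, and it is essentially a clean abstraction of what the paper actually does: the paper's proof is a case analysis that propagates colour equality along chains of betweenness relations inside each gadget, which is exactly an unrolled verification of your claim that each of $x,y,z$ lies in $conv$ of the other two, combined with the observation you isolate via Lemma \ref{lem:ConvSame}. Your treatment of the closure operations (arc reversal, global colour swap preserving betweenness) is correct and is a point the paper leaves implicit. One small off-label citation: Corollary \ref{cor:ConvexSubsets} compares hulls of nested vertex sets within one graph, not hulls across a subgraph inclusion; the persistence of hull membership from $H$ to $G$ should instead be argued directly from Theorem \ref{thm:ConvexHullIterative}, since every betweenness relation of $H$ survives in $G$. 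That is easily repaired.

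The genuine gap is in the verification step, which is where all the content of the lemma lives: your structural assumptions about $\mathcal{H}$ are wrong. The $(1,0)$ and $(2,0)$ members of $\mathcal{H}$ are not bare triangles on $\{x,y,z\}$; every member of $\mathcal{H}$ carries auxiliary vertices, and the hull memberships require multi-step chains through them (in the paper's notation, gadgets $A$, $C$, $D$ use a vertex $u$, with $z \in conv(\{x,y\})$ witnessed by $u$ between $x$ and $y$ and then $z$ between $u$ and $y$; gadget $B$ uses $u,v,w$; gadgets $E,F$ use $u,v$). More fundamentally, a triangle in which every vertex is between the other two --- such as the directed $3$-cycle you check --- \emph{cannot} be a member of $\mathcal{H}$, because of the role $\mathcal{H}$ plays in Theorems \ref{thm:planarTheorem} and \ref{thm:TreeTheorem}: there, the induced triangles $\mathcal{H}_{xyz}$ must realize precisely the problematic triangle types in which two vertices \emph{can} share a colour under a simple colouring of the triangle alone (for instance the transitive tournament, where $x$ and $y$ agree on $z$), and monochromatic directed $3$-cycles are explicitly excluded from consideration in those proofs. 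So the concrete checks you supply verify graphs that do not belong to $\mathcal{H}$, while the actual members are exactly the cases your argument leaves open. The same issue undercuts your implicit claim that only the $(0,2)$ gadget needs auxiliary vertices: by the same counting you do for two edge colours, various $(2,0)$ triangle types (those with two arcs agreeing at a vertex) fail the all-between property, which is why those gadgets also carry extra vertices. In short, your framework is correct, but the substance of the proof --- exhibiting, gadget by gadget from Figure \ref{fig:planarFig}, the betweenness chains through the auxiliary vertices --- is missing, and what is supplied in its place is a verification of the wrong graphs.
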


\begin{proof}
	Let $G$ be an $\mn$-mixed graph that contains $H\in \mathcal{H}$. 
	Let $\phi$ be a simple colouring of $G$. 
	Assume $|\{\phi(x), \phi(y), \phi(z)\}| \neq 3$. 
	There are six possibilities for $H$.
	
	\emph{Case I: $H=A$, or $H=C$ or $H=D$.}
	
	\emph{Case I.i: $\phi(x) = \phi(z)$.}\\
	If $\phi(x) = \phi(z)$, then $\phi(x) = \phi(y)$ since $y$ is between $x$ and $z$.
	Thus $\phi(x) = \phi(y) = \phi(z)$.
	
	\emph{Case I.ii: $\phi(y) = \phi(z)$.}\\
	The proceeds similarly to the previous case.
	
	\emph{Case I.iii: $\phi(x) = \phi(y)$.}\\
	If $\phi(x) = \phi(y)$, then $\phi(u) = \phi(y)$ as $u$ is between $x$ and $y$. 
	Therefore $\phi(z) = \phi(y)$, as $z$ is between $u$ and $y$. 
	Thus $\phi(x) = \phi(y) = \phi(z)$.
	
	Thus if $H=A$, or $H=C$ or $H = D$ and $|\{\phi(x), \phi(y), \phi(z)\}| \neq 3$, then $|\{\phi(x), \phi(y), \phi(z)\}| = 1$.
	
	\emph{Case II: $H = B$}. \\
	\emph{Case II.i: $\phi(x) = \phi(y)$.}\\
	If $\phi(x) = \phi(y)$, then $\phi(x) = \phi(w)$, as $w$ is between $x$ and $y$.
	Therefore $\phi(u) = \phi(y)$, as $u$ is between $w$ and $y$.
	Therefore  $\phi(z) = \phi(y)$, as $z$ is between $u$ and $y$.
	Thus $\phi(x) = \phi(y) = \phi(z)$.
	
	\emph{Case II.ii: $\phi(x) = \phi(z)$.}\\
	If $\phi(x) = \phi(z)$, then $\phi(x) = \phi(v)$, as $v$ is between $x$ and $z$.
	Therefore $\phi(u) = \phi(v)$, as $u$ is between $v$ and $x$.
	Therefore $\phi(w) = \phi(u)$, as $w$ is between $u$ and $x$.
	Therefore $\phi(y) = \phi(w)$, as $y$ is between $w$ and $x$.
	Thus $\phi(x) = \phi(y) = \phi(z)$.
	
	\emph{Case II.iii: $\phi(y) = \phi(z)$.}\\
	The proceeds similarly to the previous case.
	
	Thus if $H=B$ and $|\{\phi(x), \phi(y), \phi(z)\}| \neq 3$, then $|\{\phi(x), \phi(y), \phi(z)\}| = 1$.
	
	\emph{Case III: $H = E$ or $H=F$}. \\
	\emph{Case III.i: $\phi(x) = \phi(y)$.}\\
	This case follows similarly to \emph{Case I.iii}
	
	\emph{Case III.ii: $\phi(x) = \phi(z)$.}\\
	If $\phi(x) = \phi(z)$, then $\phi(x) = \phi(v)$, as $v$ is between $x$ and $z$.
	Therefore $\phi(u) = \phi(v)$, as $u$ is between $v$ and $x$.
	Therefore $\phi(y) = \phi(u)$, as $y$ is between $u$ and $x$.
	Thus $\phi(x) = \phi(y) = \phi(z)$.
	
	\emph{Case III.iii: $\phi(y) = \phi(z)$.}\\
	This follows similarly to the previous case by exchanging the role of $u$ and $v$.
	
	Thus if $H = E$ or $H=F$ and $|\{\phi(x), \phi(y), \phi(z)\}| \neq 3$, then $|\{\phi(x), \phi(y), \phi(z)\}| = 1$.	
	This completes the proof.
\end{proof}

We note that the case $H=E$ appears in \cite{S02}. We include it here for completeness.

Let $\mathcal{H}_{xyz}$ be the family of graphs formed from those in $\mathcal{H}$ by taking the subgraphs induced by $x,y$, and $z$ for each $H \in \mathcal{H}$. \\

\begin{theorem}\label{thm:planarTheorem}
	For all $\mn \neq (0,1)$, the family of $\mn$-mixed planar graphs is optimally simply colourable.
\end{theorem}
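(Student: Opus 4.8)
The plan is to prove the two inequalities $\chi_s(\mathcal P) \le \chi(\mathcal P)$ and $\chi_s(\mathcal P) \ge \chi(\mathcal P)$ separately. The first is routine: for every planar $(m,n)$-mixed graph $G$, either $G$ is edgeless and $\chi_s(G) \le 2$, or an optimal \emph{surjective} ordinary colouring of $G$ uses at least two vertices of its target and is therefore already a simple colouring, giving $\chi_s(G) \le \chi(G)$. Since $\chi(\mathcal P) \ge 2$, taking the maximum over the family yields $\chi_s(\mathcal P) \le \chi(\mathcal P)$. All the content is in the reverse inequality, for which it suffices to exhibit a single planar $(m,n)$-mixed graph $P$ with $\chi_s(P) = \chi(P) = \chi(\mathcal P)$.

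To build $P$, I would first note that $\chi(\mathcal P)$ is finite, since the $(m,n)$-mixed chromatic number of a planar graph is bounded in terms of the acyclic chromatic number of its underlying graph \cite{NR00}; being a bounded set of integers, the values $\chi(G)$ over $G \in \mathcal P$ attain their maximum, so fix a planar $G_0$ with $\chi(G_0) = \chi(\mathcal P) =: k$. Next I would extend a plane embedding of $G_0$ to a plane triangulation $P_0 \supseteq G_0$ by inserting chords inside faces, choosing adjacency types for the new chords so that every resulting triangle realises a base triangle of some gadget in $\mathcal H$ (this is where the hypothesis $(m,n) \ne (0,1)$ is used, as it guarantees an arc type or a second edge type and hence the availability of the gadgets in Figure \ref{fig:planarFig}). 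Because $P_0$ is planar we have $\chi(P_0) \le \chi(\mathcal P) = k$, while $G_0 \subseteq P_0$ forces $\chi(P_0) \ge k$, so $\chi(P_0) = k$. Finally, into each triangular face of $P_0$ I would insert a gadget $H \in \mathcal H$ whose base triangle $\mathcal H_{xyz}$ is that face, drawing the remaining vertices of $H$ inside the face; the result $P$ is planar and still contains $G_0$, so by the same sandwich argument $\chi(P) = k$.

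The heart of the proof is then a collapse-propagation argument. Let $\phi$ be any simple colouring of $P$. By Lemma \ref{lem:1or3}, for each inserted gadget with base triangle $xyz$ the set $\{\phi(x),\phi(y),\phi(z)\}$ has cardinality $1$ or $3$. If some base triangle is monochromatic, then any base triangle of $P_0$ sharing an edge with it has two of its vertices receiving a common image, hence cannot be rainbow, and so is monochromatic as well; since the triangles of the triangulation $P_0$ share edges along a connected dual, $\phi$ would be constant on all of $V(P)$, contradicting the non-triviality clause of a simple homomorphism on more than one vertex. Therefore every base triangle is rainbow, $\phi$ is an ordinary colouring of $P$, and so $\phi$ uses at least $\chi(P) = k$ colours. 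This gives $\chi_s(P) \ge k$, and together with $\chi_s(P) \le \chi(P) = k$ we conclude $\chi_s(P) = k = \chi(\mathcal P)$, establishing $\chi_s(\mathcal P) \ge \chi(\mathcal P)$.

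The hard part will be the construction step, not the colouring argument: I must arrange that every triangle of $P_0$ carries adjacency types realised by some member of $\mathcal H_{xyz}$, so that a gadget can actually be inserted, while keeping the drawing planar and not disturbing the value $\chi(P) = \chi(\mathcal P)$. This requires verifying that the small family $\mathcal H$ supplies a gadget for each triangle type that can occur — exploiting the freedom in typing the added chords, and locally re-typing or subdividing edges of $G_0$ if a fixed original triangle is not yet of a usable type. The favourable feature that makes this manageable is that planarity caps $\chi(P)$ from above at $\chi(\mathcal P)$ while the retained copy of $G_0$ pins it from below, so no bookkeeping about the gadgets' own chromatic contribution is needed.
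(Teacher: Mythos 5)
Your high-level plan (triangulate, attach the gadgets of Figure \ref{fig:planarFig}, use Lemma \ref{lem:1or3} to outlaw faces with exactly two colours, then propagate a collapse through the connected dual) is the same strategy as the paper's, but the step you yourself flag as ``the hard part'' is a genuine gap, and it does not go through as you describe it. You require \emph{every} triangle of the triangulation $P_0$ to be the base triangle of some member of $\mathcal{H}$. This cannot be arranged: the triangles lying inside $G_0$ come with fixed adjacency types, and some types admit no gadget at all --- a monochromatic directed $3$-cycle, a triangle whose three edges/arcs carry three distinct colours, and, for general $(m,n)$, triangles mixing arc and edge types outside the $(1,0)$, $(2,0)$, $(0,2)$ patterns of $\mathcal{H}_{xyz}$. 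Your proposed repairs (re-typing or subdividing edges of $G_0$) destroy the one property you cannot afford to lose, namely $\chi(G_0)=\chi(\mathcal{P})$: altering adjacency types or subdividing can strictly decrease the chromatic number, and then the lower anchor $\chi(P)\geq k$ fails, so the whole sandwich collapses.

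The missing idea --- which is exactly how the paper's proof avoids this --- is that you do not need a gadget on every face, only on faces that can actually receive exactly two colours under some simple colouring using at most $\chi_s(\mathcal{P})$ colours. If a face gets exactly two colours, the two adjacencies joining the repeated colour class to the third vertex must have the same type and orientation; consequently such a ``bad'' face cannot be a monochromatic directed $3$-cycle and cannot carry three distinct colours, and these bad faces are precisely the ones isomorphic (up to relabelling of colours) to members of $\mathcal{H}_{xyz}$, hence able to host a gadget. Faces that cannot host a gadget never need one: they satisfy the one-or-three-colours conclusion of Lemma \ref{lem:1or3} for free. The paper therefore takes $U(G)$ maximally planar with $\chi(G)=\chi(\mathcal{P})$ (the types of the added edges are irrelevant), defines $\mathcal{C}$ to be the set of simple colourings with at most $\chi_s(\mathcal{P})$ colours having a monochromatic edge, and attaches gadgets only to faces that are bad for some $c\in\mathcal{C}$; with that modification your collapse-propagation argument runs essentially verbatim. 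Two smaller corrections to your last step: rainbow base triangles only make $\phi$ proper on $P_0$, not on all of $P$ (gadget-internal edges may still be monochromatic), which suffices since $G_0\subseteq P_0$; and to contradict non-triviality you must also observe that a monochromatic base triangle forces the gadget's internal vertices to collapse as well, which follows from the betweenness chains used in the proof of Lemma \ref{lem:1or3} but not from its statement.
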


\begin{proof}
	For fixed $\mn \neq (0,1)$,  let $\mathcal{P}$ be the family of $\mn$-mixed planar graphs.
	Since it is clear  $\chi_s(\mathcal{P}) \leq \chi(\mathcal{P})$, it suffices to show  $\chi_s(\mathcal{P}) \geq \chi(\mathcal{P})$.
	Let $G$ be a $\mn$-mixed graph such that $U(G)$ is maximally planar and $\chi(G) = \chi(\mathcal{P})$. 
	We note that such a $G$ must exist as adding arcs/edges to a $\mn$-mixed graph cannot decrease its chromatic number. 
	Let $\mathcal{C}$ be the set of simple colourings $c$ of $G$ using no more than $\chi_s(\mathcal{P})$ colours that have a monochromatic arc/edge.
	If $\chi_s(\mathcal{P}) < \chi(\mathcal{P})$, then for each $c\in C$, there at least one triangular face $F_c$ so that exactly two of the vertices of this face are assigned the same colour by $c$.
	Such faces cannot be monochromatic directed $3$-cycles, nor can such a face contain arcs/edges of three distinct colours.
	Therefore each $F_c$  is isomorphic to some $H_c \in  \mathcal{H}_{xyz}$, up to the labels of the colours of the edges/arcs.
	
	We construct an $\mn$-mixed planar graph $G^\prime$ that contains $G$ as a proper subgraph.
	For each $c \in \mathcal{C}$ and each $F_c$ add the necessary vertices, arcs and edges to form $H^\prime_c$, the graph from $\mathcal{H}$ used to form $H_c$.
	Observe $\chi(G^\prime) \geq \chi(G) = \chi(\mathcal{P})$. 
	Since each $H \in \mathcal{H}$ is planar, it follows  $G^\prime \in \mathcal{P}$.
	Therefore $\chi(G^\prime) = \chi(\mathcal{P})$.
	
	Let $c^\prime$ be a simple colouring of $G^\prime$ using  $\chi_s(G^\prime)$ colours.
	Consider $c^\prime|_G$, the simple colouring produced by restricting $c^\prime$ to vertices of $G$.
	If $c^\prime|_G$ is not a  proper colouring, then there must be some triangular face $F$ of $G$ so that exactly two of the vertices of this face are assigned the same colour by $c^\prime|_G$.
	Since $c^\prime|_G$ uses no more than $\chi_s(\mathcal{P})$ colours, the existence of $F$ implies  $c^\prime|_G \in \mathcal{C}$.
	
	However, by the construction of $G^\prime$ and Lemma \ref{lem:1or3} the vertices on $F$ either all are assigned the same colour or all assigned distinct colours.
	This contradicts the existence of $F$.
	Since no such $F$ exists, it must be  $c^\prime|_G$ is in fact a  colouring of $G$.
	Since $\chi(G) = \chi(\mathcal{P})$, it follows that $c^\prime|_G$ uses $\chi(\mathcal{P})$ colours.
	This implies  $c^\prime$ uses at least $\chi(\mathcal{P})$ colours. 
	Thus $\chi_s(G^\prime) \geq \chi(\mathcal{P})$.
	Therefore $\chi_s(\mathcal{P}) \geq \chi(\mathcal{P})$, as required.
\end{proof}

We proceed similarly to show the  the family of $\mn$-mixed partial $k$-trees ($k \geq 3$) is optimally simply colourable.

\begin{theorem}\label{thm:TreeTheorem}
	For all $\mn \neq (1,0)$ and all $k \geq 3$,  the family of \mn-mixed partial $k$-trees is optimally simply colourable.
\end{theorem}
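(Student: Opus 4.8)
The plan is to run the argument of Theorem~\ref{thm:planarTheorem} almost verbatim, with the triangular faces of a maximal planar graph replaced by triangles sitting inside the $(k+1)$-cliques of a $k$-tree. As in the planar case it is enough to prove $\chi_s(\mathcal{T}_k) \geq \chi(\mathcal{T}_k)$. First I would fix a $\mn$-mixed graph $G$ whose underlying graph $U(G)$ is a $k$-tree with $\chi(G) = \chi(\mathcal{T}_k)$: start from any member of the family attaining the maximum, complete its underlying graph to a $k$-tree by adding edges with arbitrary adjacency types, and observe that this can only keep the chromatic number equal to the family maximum. The reason for insisting on a $k$-tree is the structural fact that for $k \geq 2$ every edge of a $k$-tree lies in a triangle, so any simple colouring of $G$ that fails to be proper has a monochromatic edge, and that edge is carried by a triangle two of whose vertices share a colour.

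Next I would reproduce the bookkeeping of the planar proof. Let $\mathcal{C}$ be the finite set of simple colourings of $G$ using at most $\chi_s(\mathcal{T}_k)$ colours that are not proper, and suppose for contradiction that $\chi_s(\mathcal{T}_k) < \chi(\mathcal{T}_k)$. Each $c \in \mathcal{C}$ then selects a triangle $T_c$ exactly two of whose vertices receive the same colour. Such a triangle can be neither a monochromatic directed $3$-cycle nor a triangle carrying three distinct adjacency types, since in either of those configurations every vertex is between the other two and the colouring of the triangle would be forced to be monochromatic or rainbow; hence, up to relabelling of colours, $T_c$ is isomorphic to a member of $\mathcal{H}_{xyz}$. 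For each such $T_c$ I would glue in the remaining vertices and adjacencies of the corresponding gadget $H_c \in \mathcal{H}$ to obtain a supergraph $G'$ containing $G$, with $\chi(G') \geq \chi(G) = \chi(\mathcal{T}_k)$. Lemma~\ref{lem:1or3} then forces every such triangle of $G'$ to be monochromatic or rainbow, so the restriction to $G$ of any simple colouring of $G'$ is a proper colouring; exactly as in Theorem~\ref{thm:planarTheorem} this gives $\chi_s(G') \geq \chi(\mathcal{T}_k)$ and hence the desired inequality.

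The one genuinely new ingredient, and the step I expect to be the main obstacle, is verifying that $G'$ is again a partial $k$-tree, i.e.\ that attaching the gadgets does not push the treewidth above $k$. This is where the hypothesis $k \geq 3$ is used. I would fix a width-$k$ tree decomposition of $U(G)$; since the triangle $\{x,y,z\}$ of each $T_c$ is a clique it lies inside some bag, and I would attach to that bag a tree decomposition of the glued gadget. The glued graph is a clique-sum of $G$ with the gadgets along these triangles, and because each gadget meets the rest of $G'$ only in its own triangle (all of the gadget's added vertices are fresh), the treewidth of the result is the maximum of the treewidth of $U(G)$ and the treewidths of the individual gadgets. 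It therefore suffices that each $H \in \mathcal{H}$ has treewidth at most $k$; since each gadget has at most six vertices this can be checked directly, and one finds treewidth at most $3$. This is precisely the point at which $k=2$ would fail: the five-vertex oriented gadgets carry more than the $2\cdot 5 - 3 = 7$ edges permitted in a partial $2$-tree, whereas they comfortably fit the bound $3\cdot 5 - \binom{4}{2} = 9$ available for partial $3$-trees.

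Two further points would need brief attention. As in Theorem~\ref{thm:planarTheorem}, the lone excluded adjacency-type family is the one on which no member of $\mathcal{H}$ can be built, since there a triangle edge cannot force its two endpoints to disagree on the third vertex, so no gadget can force the $1$-or-$3$ dichotomy. And one should check that the simultaneous attachment of one gadget per member of $\mathcal{C}$—potentially many gadgets, but finitely many since $\mathcal{C}$ is finite—can be realised as a sequence of clique-sums along triangles that leaves the width at $k$; this follows from the clique-sum description above, as distinct gadgets share no vertices outside $G$. With these in hand the reduction is complete, and the remaining reasoning is identical to the proof of Theorem~\ref{thm:planarTheorem}.
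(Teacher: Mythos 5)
Your proposal is correct and follows essentially the same route as the paper's own proof: reduce to $\chi_s(\mathcal{T}_k) \geq \chi(\mathcal{T}_k)$, fix an extremal $G$ whose underlying graph is a $k$-tree, locate in each non-proper simple colouring a two-coloured triple isomorphic to a member of $\mathcal{H}_{xyz}$ (the paper finds these by pigeonhole inside the $(k+1)$-cliques, you via triangles through monochromatic edges --- a cosmetic difference), glue on the corresponding gadgets from $\mathcal{H}$, and invoke Lemma \ref{lem:1or3} to force a contradiction. Your clique-sum argument that $G'$ stays a partial $k$-tree, and your observation that this is precisely where $k \geq 3$ is needed, correctly fills in a step the paper only asserts.
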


\begin{proof}
	For fixed $\mn \neq (0,1)$ and $k\geq 3$,  let $\mathcal{T}_k$ be the family of  partial $\mn$-mixed $k$-trees.
	Since it is clear  $\chi_s(\mathcal{T}_k) \leq \chi(\mathcal{T}_k)$, it suffices to show  $\chi_s(\mathcal{T}_k) \geq \chi(\mathcal{T}_k)$.
	Let $G$ be a $\mn$-mixed graph such that $U(G)$ is a $k$-tree and $\chi(G) = \chi(\mathcal{T}_k)$. 
	We note that such a $G$ must exist, as adding arcs/edges to a $\mn$-mixed graph cannot decrease its chromatic number. 
	Let $\mathcal{C}$ be the set of simple colourings $c$ of $G$ using no more than $\chi_s(\mathcal{T}_k)$ colours that have a monochromatic arc/edge.
	Recall that as $G$ is $k$-tree it constructed via sequence of cliques of order $k+1$: $V_1, V_2, \dots, V_\ell$.
	If $\chi_s(\mathcal{T}_k) < \chi(\mathcal{T}_k)$, then for each $c\in C$, there is at least one clique $V_i = \{v_1,v_2,\dots v_{k+1}\}$ so that $2 \leq |\{ \phi(u_1), \phi(u_2), \dots, \phi(u_{k+1})\}| \leq k$.

	Therefore each such clique $V_i$ contains a triple of vertices $u,v,w$ so that $|\{c(u),c(v),c(w)\}|=2$.
	Such triples cannot induce monochromatic directed $3$-cycles, nor can such a triple of vertices induce a subgraph that contains contain arcs/edges of three distinct colours.
	For each $c \in \mathcal{C}$ let $B_c$ the set of such triples.
We see then  that for each $c \in \mathcal{C}$ each element $\{u,v,w\}$ of $B_c$ induces a subgraph $H^c_{u,v,w}$ that is isomorphic to an element of $\mathcal{H}_{xyz}$.

	We construct an $\mn$-mixed partial $k$-tree $G^\prime$ that contains $G$ as a proper subgraph.
	For each $c \in \mathcal{C}$ and each element $\{u,v,w\}$ of $B_c$ we add the necessary vertices, arcs and edges to form the graph from $\mathcal{H}$ used to form $H^c_{u,v,w}$.
	As $G$ is a proper subgraph of $G^\prime$ we have $\chi(G^\prime) \geq \chi(G) = \chi(\mathcal{T}_k)$. 
	Since each $H \in \mathcal{H}$ is a partial $k$-tree, it follows  $G^\prime \in \mathcal{T}_k$.
	Therefore $\chi(G^\prime) = \chi(\mathcal{T}_k)$.
	
	Let $c^\prime$ be a simple colouring of $G^\prime$ using  $\chi_s(G^\prime)$ colours.
	Consider $c^\prime|_G$, the simple colouring produced by restricting $c^\prime$ to vertices of $G$.
	If $c^\prime|_G$ is not a proper colouring, then there must be some clique $V = \{v_1,v_2,\dots v_{k+1}\}$ so that $2 \leq |\{ \phi(u_1), \phi(u_2), \dots, \phi(u_k)\}| \leq k$.
	Therefore such a clique contains a triple of vertices $u,v,w$ so that $|\{c(u),c(v),c(w)\}|=2$.
	Since $c^\prime|_G$ uses no more than $\chi_s(\mathcal{P})$ colours, the existence of $\{u,v,w\}$ implies  $c^\prime|_G \in \mathcal{C}$.
	
	However, by the construction of $G^\prime$ and Lemma \ref{lem:1or3} the vertices $\{u,v,w\}$ either all are assigned the same colour or all assigned distinct colours.
	This contradicts the existence of $\{u,v,w\}$ and thus the existence of $V$.
	Since no such $V$ exists, it must be  $c^\prime|_G$ is a  proper colouring of $G$.
	Since $\chi(G) = \chi(\mathcal{T}_k)$, it follows $c^\prime|_G$ uses $\chi(\mathcal{T}_k)$ colours.
	This implies  $c^\prime$ uses at least $\chi(\mathcal{T}_k)$ colours. 
	Thus $\chi_s(G^\prime) \geq \chi(\mathcal{T}_k)$.
	Therefore $\chi_s(\mathcal{T}_k) \geq \chi(\mathcal{T}_k)$, as required.
\end{proof}

The method in the proofs of Theorems \ref{thm:planarTheorem} and \ref{thm:TreeTheorem} also yields the following result:

\begin{theorem}\label{thm:planartree}
	For all $\mn \neq (1,0)$   the family of  \mn-mixed planar graphs with tree width $3$ is simply optimally colourable.
\end{theorem}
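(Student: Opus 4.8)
The plan is to run the arguments of Theorems \ref{thm:planarTheorem} and \ref{thm:TreeTheorem} simultaneously, since the family in question lies in the intersection of the two families already handled. Writing $\mathcal{Q}$ for the family of $\mn$-mixed planar graphs of treewidth $3$, the inequality $\chi_s(\mathcal{Q}) \leq \chi(\mathcal{Q})$ is immediate, so it suffices to prove $\chi_s(\mathcal{Q}) \geq \chi(\mathcal{Q})$. First I would choose a $\mn$-mixed graph $G$ with $\chi(G) = \chi(\mathcal{Q})$ whose underlying graph $U(G)$ is an edge-maximal member of the family; such a $G$ exists because adding arcs or edges never lowers the chromatic number. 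The structural point to record is that an edge-maximal planar graph of treewidth $3$ is a triangulation of treewidth $3$ (a planar $3$-tree): every internal face is a triangle, exactly as in Theorem \ref{thm:planarTheorem}, so the face-localization of that proof is available verbatim.

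Assuming for contradiction that $\chi_s(\mathcal{Q}) < \chi(\mathcal{Q})$, I would then transcribe the earlier argument: for any simple colouring $c$ of $G$ using at most $\chi_s(\mathcal{Q})$ colours that is not proper, some triangular face has its three vertices assigned exactly two colours, and since such a triangle is neither a monochromatic directed $3$-cycle nor a triangle carrying three distinct adjacency types, it induces a member of $\mathcal{H}_{xyz}$. I would form $G'$ by pasting into each offending face the full gadget $H \in \mathcal{H}$ from which that induced triangle arises, adding one gadget per face. By Lemma \ref{lem:1or3}, in every simple colouring of $G'$ the vertices $\{x,y,z\}$ of each gadget receive one or three colours; hence the restriction to $V(G)$ of any simple colouring of $G'$ is a genuine colouring of $G$ and so uses $\chi(G) = \chi(\mathcal{Q})$ colours. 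This yields $\chi_s(G') \geq \chi(\mathcal{Q})$, which together with $G' \in \mathcal{Q}$ gives the required inequality.

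The step I expect to be the main obstacle is verifying that $G'$ remains in $\mathcal{Q}$, that is, that the augmentation preserves planarity \emph{and} treewidth $3$ at the same time, together with the supporting structural claim that the extremal $G$ may be taken to be a triangulation of treewidth $3$. Here I would use that every $H \in \mathcal{H}$ is planar (Theorem \ref{thm:planarTheorem}) and is a partial $3$-tree (Theorem \ref{thm:TreeTheorem}), and that in each gadget the attachment vertices $x,y,z$ form a triangle bounding a face of $H$. Pasting such a gadget into the triangular face of $G$ bounded by $x,y,z$ is then a planar clique-sum along the $3$-clique $\{x,y,z\}$: planarity survives because the gadget is drawn inside the face and distinct triangular faces of a triangulation meet in at most an edge, while treewidth at most $3$ survives because a clique-sum of partial $3$-trees along a clique of order at most $4$ is again a partial $3$-tree, with subgraph monotonicity forcing treewidth exactly $3$. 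The genuinely new labour, beyond copying the earlier proofs, is confirming for the finite family $\mathcal{H}$ that each of its members embeds inside a triangular face with its $xyz$-triangle on the boundary and that the resulting clique-sums keep the treewidth from exceeding $3$; once this is checked, the result follows exactly as in Theorems \ref{thm:planarTheorem} and \ref{thm:TreeTheorem}.
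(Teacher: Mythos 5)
Your overall architecture is exactly what the paper intends: the paper gives no separate proof of Theorem \ref{thm:planartree}, saying only that the method of Theorems \ref{thm:planarTheorem} and \ref{thm:TreeTheorem} yields it, and your transcription of that method is largely sound. In particular, attaching one gadget from $\mathcal{H}$ per offending triangle, preserving planarity by drawing each gadget inside its (triangular) face, and preserving treewidth because $G'$ is an iterated clique-sum of partial $3$-trees along triangles are all correct steps, and they are the same checks the paper's own two proofs implicitly rely on.

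The gap is the structural claim you record in your first paragraph and never discharge: that an edge-maximal planar graph of treewidth $3$ is a planar triangulation of treewidth $3$ (a planar $3$-tree), equivalently that the extremal $G$ may be chosen with $U(G)$ a triangulation. This is not a routine observation. Iterating it is equivalent to asserting that every planar graph of treewidth at most $3$ can be completed, by adding edges, to a planar $3$-tree; this is a genuinely nontrivial theorem (a result of El-Mallah and Colbourn, with a complete proof given later by Kratochv\'{\i}l and Vaner, that planar partial $3$-trees are exactly the subgraphs of planar $3$-trees). Contrast this with the two proofs you are copying: in Theorem \ref{thm:planarTheorem} the completion step is trivial (any planar graph extends to a maximal planar graph), and in Theorem \ref{thm:TreeTheorem} it is essentially definitional (any partial $k$-tree is a spanning subgraph of a $k$-tree). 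Here planarity and treewidth must be completed \emph{simultaneously}, and this is the one point where the combined theorem needs an ingredient not already present in the paper; yet your closing list of the ``genuinely new labour'' mentions only the gadget-embedding and clique-sum checks and omits this claim, so as written the proof assumes its crux. The repair is to cite (or prove) the result above; note that the weaker, non-spanning form suffices for your argument: given an extremal $F \in \mathcal{Q}$, embed $U(F)$ as a subgraph of a planar $3$-tree $\Gamma$, extend the arc/edge structure of $F$ arbitrarily to the remaining edges of $\Gamma$, and the resulting $G \in \mathcal{Q}$ satisfies $\chi(G) = \chi(\mathcal{Q})$ with $U(G)$ a triangulation of treewidth $3$, after which your argument goes through.
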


One can easily observe that family of $(m,n)$-mixed trees is not simple optimally colourable -- by Theorem \ref{cor:edgeCut} each such tree has simple chromatic number $2$.
This leaves open the question of simple optimal colourability for $2$-trees. Here we show that this is not the case for orientations of $2$-trees and $2$-edge-coloured $2$-trees.

\begin{theorem}
	If $G$ is an oriented $2$-tree, then $\chi_s(G) \leq 3$.
\end{theorem}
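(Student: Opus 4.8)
The plan is to avoid working with simple colourings as vertex partitions (where the global ``no reversed arc between two colour classes'' condition is unpleasant to track) and instead to commit in advance to a single fixed reflexive target. Let $H$ be the directed $3$-cycle with $V(H) = \mathbb{Z}_3$ and arcs $i \to i+1 \pmod 3$. Since $m=1$ there is only one arc colour, so colour-preservation is automatic, and by the paper's definition a simple homomorphism $c : G \to_s H$ is exactly a \emph{non-constant} map $c : V(G) \to \mathbb{Z}_3$ such that every arc $u \to w$ of $G$ is either monochromatic ($c(u)=c(w)$, permitted in a simple homomorphism) or maps to an actual arc of $H$. Equivalently, I would reformulate the whole problem as: produce a non-constant $c : V(G) \to \mathbb{Z}_3$ with $c(w) - c(u) \in \{0,1\} \pmod 3$ on every arc $u \to w$. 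Any such $c$ witnesses $\chi_s(G) \leq |V(H)| = 3$.

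I would establish the existence of such a $c$ by induction on $|V(G)|$, using the recursive description of $2$-trees: $G$ is either $K_2$ or is obtained from a smaller $2$-tree $G'$ by adding a vertex $v$ adjacent to exactly the two endpoints $a,b$ of an edge of $G'$. For the base case $|V(G)| = 2$, the graph is a single arc; I colour its tail $0$ and its head $1$, which is non-constant and satisfies $1 - 0 \in \{0,1\}$. For the inductive step I take a valid non-constant colouring $c'$ of $G'$ (which exists by induction), keep it unchanged on $G'$, and extend it to $v$. The extension is automatically non-constant and leaves every old arc satisfied, so the only constraints to meet are those coming from the two new arcs joining $v$ to $a$ and to $b$.

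The key step is the choice of $c(v)$. The arc between $v$ and $a$ pins $c(v)$ into a set $S_a$ of size two: if the arc is $a \to v$ then $c(v) \in \{c'(a), c'(a)+1\}$, while if it is $v \to a$ then $c(v) \in \{c'(a), c'(a)-1\}$; either way $S_a$ is a $2$-element subset of $\mathbb{Z}_3$. The arc at $b$ likewise forces $c(v) \in S_b$ with $|S_b| = 2$. Because any two $2$-element subsets of the $3$-element set $\mathbb{Z}_3$ satisfy $|S_a \cap S_b| \geq |S_a| + |S_b| - |S_a \cup S_b| \geq 2 + 2 - 3 = 1$, the intersection is non-empty; picking any $\gamma \in S_a \cap S_b$ and setting $c(v) = \gamma$ satisfies both new constraints and completes the induction.

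I do not expect a genuine computational obstacle here: the difficulty is entirely conceptual, namely recognising that the awkward case in a partition-based approach (a freshly created triangle that happens to be a directed $3$-cycle) dissolves once one fixes the target $\vec{C_3}$ and the modular rule $c(w)-c(u)\in\{0,1\}$, after which extension reduces to the elementary pigeonhole fact that two $2$-subsets of $\mathbb{Z}_3$ must meet. Finally I would note that the bound is tight: the directed triangle is itself an oriented $2$-tree, and a short check shows it admits no simple $2$-colouring (every partition of its three vertices into two classes puts two oppositely-directed arcs between the classes), so $\chi_s(\vec{C_3}) = 3$.
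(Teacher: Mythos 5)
Your proof is correct and follows essentially the same route as the paper's: both fix the directed $3$-cycle (viewed as a reflexive target) and induct on the recursive $2$-tree construction, extending a simple colouring over a newly added degree-$2$ vertex, with non-constancy inherited from the smaller graph. The only difference is in the verification of the extension step, where the paper normalizes $\phi(z_1)$ and $\phi(z_2)$ via vertex- and arc-transitivity of the target and then checks an eight-row table, while your modular reformulation (requiring $c(w)-c(u) \in \{0,1\} \pmod 3$ on every arc $u \to w$) reduces the same step to the observation that two $2$-element subsets of $\mathbb{Z}_3$ must intersect, which eliminates the case analysis.
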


\begin{proof}
	We show by induction  every orientation of a $2$-tree admits a simple homomorphism to the directed cycle on three vertices, noting such a homomorphism exists of both the transitive triple and the  directed cycle on three vertices.
	Let $H$ be the directed cycle on three vertices with arc set $A(H) = \{x_1x_2, x_2x_3, x_3x_1\}$.
	Consider $T$, an orientation of a $2$-tree on $n> 3$ vertices.
	As $T$ is an orientation of $2$-tree, $U(T)$ necessarily has a vertex of degree $2$, say $z$.
	Let $z_1$ and $z_2$ be the neighbours of $z$ in $T$ so that $z_1z_2 \in A(T)$.
	By induction there exists $\phi: T-\{z\} \to_s H$.
	As $H$ is vertex transitive  and arc transitive, we may assume $\phi(z_1) = x_1$ and $\phi(z_2) \in \{x_1,x_2\}$.
	The table below gives an extension of $\phi$ to include $z$ based on  $\phi(z_2)$ and the direction of the arcs between $z$ and $z_1$ and between $z$ and $z_2$. 
	In the table below, $+$ denotes that relevant vertex is an out-neighbour of $z$ and $-$ denotes that the relevant vertex is an in-neighbour of $z$.
	\begin{center}
$	\begin{array}{c|c|c|c}
		z_1 & z_2 & \phi(z_2) & \phi(z)\\
		\hline
		+ & + & x_1 &  x_2\\
		+ & - & x_1 &  x_1\\
		- & + & x_1 &  x_1\\
		- & - & x_1 &  x_3\\
		+ & + & x_2 & x_2 \\
		+ & - & x_2 & x_1\\
		- & + & x_2 & x_3\\
		- & - & x_2 &  x_1
	\end{array}$
 	\end{center}
\end{proof}

In \cite{OP08}, Ochem and Pinlou show the chromatic number of the family of oriented $2$-trees is $7$.

\begin{corollary}
	The family of orientations of partial $2$-trees is not optimally simply colourable
\end{corollary}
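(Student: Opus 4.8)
The plan is to read this off as an immediate consequence of the two facts flanking the statement: the preceding Theorem, which gives $\chi_s(G) \le 3$ for every oriented $2$-tree, and the result of \cite{OP08} that the oriented chromatic number of $2$-trees is $7$. Writing $\mathcal{F}$ for the family of orientations of partial $2$-trees, I would show $\chi_s(\mathcal{F}) \le 3$ and $\chi(\mathcal{F}) = 7$, so that $\chi_s(\mathcal{F}) \le 3 < 7 = \chi(\mathcal{F})$; since being optimally simply colourable requires $\chi_s(\mathcal{F}) = \chi(\mathcal{F})$, the family fails to be optimally simply colourable.

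First I would pin down $\chi(\mathcal{F}) = 7$. Every $2$-tree is a partial $2$-tree, so $\chi(\mathcal{F}) \ge 7$ is immediate from \cite{OP08}. For the reverse inequality I would observe that every orientation of a partial $2$-tree is a spanning subgraph of an orientation of a $2$-tree: complete its underlying graph to a $2$-tree on the same vertex set and orient the added edges arbitrarily. Since (ordinary) chromatic number is monotone under subgraphs --- the restriction of a homomorphism is again a homomorphism --- this yields $\chi(\mathcal{F}) \le 7$, hence equality.

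For the simple side I would reuse the same embedding. Given an orientation $T'$ of a partial $2$-tree, realise it as a spanning subgraph of an orientation $T$ of a $2$-tree. The preceding Theorem gives $\chi_s(T) \le 3$; taking an optimal simple colouring of $T$ (surjective onto a $3$-vertex target by Corollary \ref{cor:chiSurj}) and restricting it to the \emph{unchanged} vertex set of $T'$ produces a map that still preserves all arcs and their colours in $T'$ and remains non-trivial, hence a simple colouring of $T'$ with at most $3$ colours. Thus $\chi_s(\mathcal{F}) \le 3$. The only step that is not completely formal is precisely this passage from $2$-trees to partial $2$-trees: I would need the standard fact that $2$-trees are exactly the edge-maximal graphs of treewidth $2$, so that every partial $2$-tree is a spanning subgraph of a $2$-tree, together with the (easy) monotonicity of $\chi_s$ under spanning subgraphs, which holds because restricting a non-trivial homomorphism to the same vertex set keeps it non-trivial while it continues to preserve the now-fewer adjacencies. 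With these in hand, $3 < 7$ delivers the conclusion.
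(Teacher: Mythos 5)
Your proposal is correct and follows exactly the route the paper intends: the corollary is stated without proof precisely because it is the immediate combination of the preceding theorem ($\chi_s \le 3$ for oriented $2$-trees) with the Ochem--Pinlou result that $\chi = 7$ for this family. Your additional care in passing from $2$-trees to partial $2$-trees --- embedding as a spanning subgraph and noting that restricting a simple colouring to the same vertex set preserves non-triviality --- fills in a detail the paper glosses over, but it is the same argument.
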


\begin{theorem}
	If $G$ is a $2$-edge-coloured $2$-tree, then $\chi_s(G) \leq 5$.
\end{theorem}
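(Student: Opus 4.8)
The plan is to mirror the inductive argument used just above for oriented $2$-trees, but with a fixed five-vertex reflexive target. First I would fix the $2$-edge-coloured graph $H$ on vertex set $\mathbb{Z}_5$ obtained by colouring the edge $ij$ with colour $1$ when $i-j \equiv \pm 1 \pmod 5$ and with colour $2$ when $i-j \equiv \pm 2 \pmod 5$; equivalently, the colour-$1$ edges and the colour-$2$ edges each form a $5$-cycle and $U(H) = K_5$. Since a simple homomorphism may use the loop of each colour at every target vertex, it suffices to produce a simple homomorphism $G \to_s H$ for every $2$-edge-coloured $2$-tree $G$, which gives $\chi_s(G) \le 5$.

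The key structural property of $H$ that I would isolate first is the following: for every ordered pair of distinct vertices $p,q$ and every colour $c \in \{1,2\}$, either the edge $pq$ has colour $c$, or there is a vertex $r \notin \{p,q\}$ with $pr$ and $qr$ both of colour $c$. This is a short finite check exploiting the vertex- and colour-transitivity of $H$: for a colour-$1$ edge $\{i,i+1\}$ the vertex $i+3$ is a common colour-$2$ neighbour, and for a colour-$2$ edge $\{i,i+2\}$ the vertex $i+1$ is a common colour-$1$ neighbour (indices mod $5$).

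Then I would argue by induction on $|V(G)|$ that $G \to_s H$. The base cases $|V(G)| \le 3$ are immediate: a $2$-edge-coloured $K_3$ has two like-coloured edges meeting at a common vertex by pigeonhole, hence $\chi_s = 2$, and the two resulting classes map onto the endpoints of an edge of the appropriate colour in $H$. For the inductive step, I would use the same structural fact about $2$-trees invoked in the oriented case: a $2$-tree on $n > 3$ vertices has a degree-$2$ vertex $z$ whose neighbours $z_1,z_2$ are adjacent. Applying the hypothesis to $G - z$ yields $\phi \colon G - z \to_s H$, which I extend to $z$. Writing $p = \phi(z_1)$, $q = \phi(z_2)$ and letting $a,b$ be the colours of $zz_1, zz_2$: if $p = q$ set $\phi(z) = p$ so both new edges become loops; if $p \neq q$ and $pq$ has colour $a$ set $\phi(z) = q$, while if $pq$ has colour $b$ set $\phi(z) = p$; otherwise $a = b$ is the colour opposite to that of $pq$, and the property of $H$ supplies a common neighbour $r$ of $p$ and $q$ in that colour, which I take as $\phi(z)$. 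In every case the extension preserves edges and their colours and keeps at least two images in use, so it is a simple homomorphism.

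The only genuine obstacle is choosing $H$ so that the final ``otherwise'' case never fails; everything else is bookkeeping. The double-$5$-cycle colouring of $K_5$ is precisely what makes the common-neighbour property hold for both colours simultaneously, and it appears to be the smallest target for which this succeeds, which is what forces the bound to be $5$ rather than the $3$ of the oriented case. I would close by noting, as with the preceding theorem, that the non-triviality clause of a simple homomorphism is automatic, since $G - z$ already uses at least two images.
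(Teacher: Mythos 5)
Your proof is correct and takes essentially the same approach as the paper: the target is the identical double-$C_5$ colouring of $K_5$ (red edges a $5$-cycle, blue edges its complement), and the induction likewise deletes a simplicial degree-$2$ vertex $z$ of the $2$-tree and extends the colouring of $G - z$. The only difference is presentational: where the paper normalizes via vertex/edge-transitivity and self-complementarity of $C_5$ and then checks an eight-row table, you isolate the ``common neighbour in each colour'' property of $H$, which is precisely what that table verifies.
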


\begin{proof}
		We show by induction that every $2$-edge-coloured $2$-tree admits a simple homomorphism to the complete $2$-edge-coloured graph whose red edges induce a copy of $C_5$, noting that such a homomorphism exists for each of the four $2$-edge-coloured $2$-trees on three vertices.
		Let $H$ be the complete $2$-edge-coloured graph on five vertices whose red edges form the cycle $C = x_1,x_2,x_3,x_4,x_5$.
		Consider $T$, a $2$-edge-coloured $2$-tree on $n > 3$ vertices.
		As $U(T)$ is a $2$-tree, $U(T)$ necessarily has a vertex of degree $2$, say $z$.
		Let $z_1$ and $z_2$ be the neighbours of $z$ in $T$.
		By induction there exists $\phi: T-\{z\} \to_s H$.
		Notice $H$ is vertex transitive.
		Also observe $\overline{C_5} \cong C_5$ and $C_5$ is edge transitive.
		Thus we may assume, without loss of generality that $\phi(z_1) = x_1$ and $\phi(z_2) \in \{x_1,x_2\}$.
		The table below gives an extension of $\phi$ to include $z$ based on  $\phi(z_2)$ and the colour the edges $zz_1$ and $zz_2$. 
		In the table below, $r$ denotes the existence of a red edge between $z$ and the relevant vertex and $b$ denotes the existence of a blue edge between $z$ and the relevant vertex.
				
		\begin{center}
			$	\begin{array}{c|c|c|c}
			z_1 & z_2 & \phi(z_2) & \phi(z)\\
			\hline
			r & r & x_1 & x_2  \\
			r & b & x_1 &  x_1 \\
			b & r & x_1 &  x_1\\
			b & b & x_1 & x_3  \\
			r & r & x_2 & x_1 \\
			r & b & x_2 & x_4 \\
			b & r & x_2 & x_2\\
			b & b & x_2 & x_3 
			\end{array}$
		\end{center}
\end{proof}

\begin{corollary}
	The family of $2$-edge-coloured partial $2$-trees is not optimally simply colourable.
\end{corollary}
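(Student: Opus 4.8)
The plan is to establish the two inequalities $\chi_s(\mathcal{F}) \le 5$ and $\chi(\mathcal{F}) \ge 6$ for $\mathcal{F}$ the family of $2$-edge-coloured partial $2$-trees; together with the always-valid bound $\chi_s(\mathcal{F}) \le \chi(\mathcal{F})$, this yields $\chi_s(\mathcal{F}) < \chi(\mathcal{F})$, which is precisely the statement that $\mathcal{F}$ is not optimally simply colourable. This mirrors the structure of the preceding oriented corollary, where the estimate $\chi_s \le 3$ was paired with the value $\chi = 7$ from \cite{OP08}.

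For the upper bound I would argue that the inductive colouring scheme of the preceding theorem carries over from $2$-trees to partial $2$-trees. The only structural fact used in that proof is that the graph being coloured has a vertex $z$ of degree $2$ whose deletion leaves a smaller graph of the same type; every partial $2$-tree has treewidth at most $2$ and hence a vertex of degree at most $2$, and a vertex of degree $0$ or $1$ only relaxes the constraints recorded in the extension table (it amounts to deleting rows, never adding new requirements on $\phi(z)$). Deleting such a vertex yields a smaller partial $2$-tree, and reinserting it and applying (a sub-case of) the table extends the simple homomorphism to the $C_5$-based target $H$ on five vertices while preserving non-triviality once the graph has at least two vertices. Hence every $G \in \mathcal{F}$ satisfies $\chi_s(G) \le 5$, so $\chi_s(\mathcal{F}) \le 5$.

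For the lower bound it suffices to exhibit a single $2$-edge-coloured partial $2$-tree admitting no homomorphism to any $2$-edge-coloured graph on five vertices, i.e.\ one with ordinary chromatic number at least $6$. Exactly as in the oriented setting, where the needed separation is supplied by the Ochem--Pinlou value $\chi = 7$, I would invoke the corresponding known lower bound on the $2$-edge-coloured chromatic number of the class of $2$-trees (which is at least $6$), or, failing a convenient citation, supply an explicit $2$-tree construction realising it. Combining the two bounds gives $\chi_s(\mathcal{F}) \le 5 < 6 \le \chi(\mathcal{F})$, whence $\chi_s(\mathcal{F}) \neq \chi(\mathcal{F})$ and $\mathcal{F}$ is not optimally simply colourable.

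The main obstacle is the lower bound $\chi(\mathcal{F}) \ge 6$. The upper bound is essentially immediate from the inductive argument already given, but showing that some $2$-edge-coloured $2$-tree genuinely requires six colours under (non-simple) homomorphism is a separate combinatorial problem of the same flavour as the Ochem--Pinlou determination of the oriented chromatic number of $2$-trees, which was computer-assisted. I would therefore lean on the established value of the $2$-edge-coloured chromatic number of the class of $2$-trees rather than attempt to reprove it here.
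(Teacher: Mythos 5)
Your overall architecture, $\chi_s(\mathcal{F}) \le 5 < 6 \le \chi(\mathcal{F})$, is exactly the paper's, and your attention to the passage from $2$-trees to partial $2$-trees in the upper bound is actually more careful than the paper itself. One caveat there: a degree-$2$ vertex $z$ whose two neighbours are non-adjacent is a genuinely new case, not merely a ``deleted row'' of the extension table. In that case the pair $(\phi(z_1),\phi(z_2))$ may be a blue edge of the target, e.g.\ $(x_1,x_3)$, a configuration the table never lists (it assumes $\phi(z_2)\in\{x_1,x_2\}$ via edge-transitivity of the red cycle); one reduces it to the $\phi(z_2)=x_2$ rows using the colour-swap isomorphism $\overline{C_5}\cong C_5$, just as the preceding theorem does for blue edges of a $2$-tree. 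So the upper bound does survive for partial $2$-trees, but not quite for the reason you state.

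The genuine gap is the lower bound $\chi(\mathcal{F}) \ge 6$, which you yourself identify as the main obstacle and then leave to ``an established value'' or an unsupplied construction. That lower bound is the entire content of the paper's proof: the paper exhibits an explicit $2$-edge-coloured $2$-tree on six vertices (Figure \ref{fig:2tree}) and observes that it is a clique, so that its chromatic number equals its order, namely $6$. There is no citation playing the role that \cite{OP08} plays in the oriented corollary --- none is invoked in the paper and none appears in its bibliography for the $2$-edge-coloured case --- so deferring to the literature does not close the argument, and no witness is constructed in your proposal either. To complete the proof you must produce the witness: build a $2$-tree on six vertices and assign edge colours so that, in the sense of Theorem \ref{thm:BDS}, every pair of non-adjacent vertices $u,v$ has a common neighbour $w$ whose edges $uw$ and $vw$ receive distinct colours (i.e.\ $w$ is between $u$ and $v$); such a graph is a clique, hence needs six colours, and since it is a partial $2$-tree this yields $\chi(\mathcal{F})\ge 6$. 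This is a small finite check, but as written your proof omits precisely this essential step.
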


\begin{proof}
	The $2$-edge-coloured $2$-tree given in Figure \ref{fig:2tree} is a clique and thus has chromatic number $6$. 

\end{proof}

\begin{figure}	
\begin{center}
		\includegraphics[width= 0.25\linewidth]{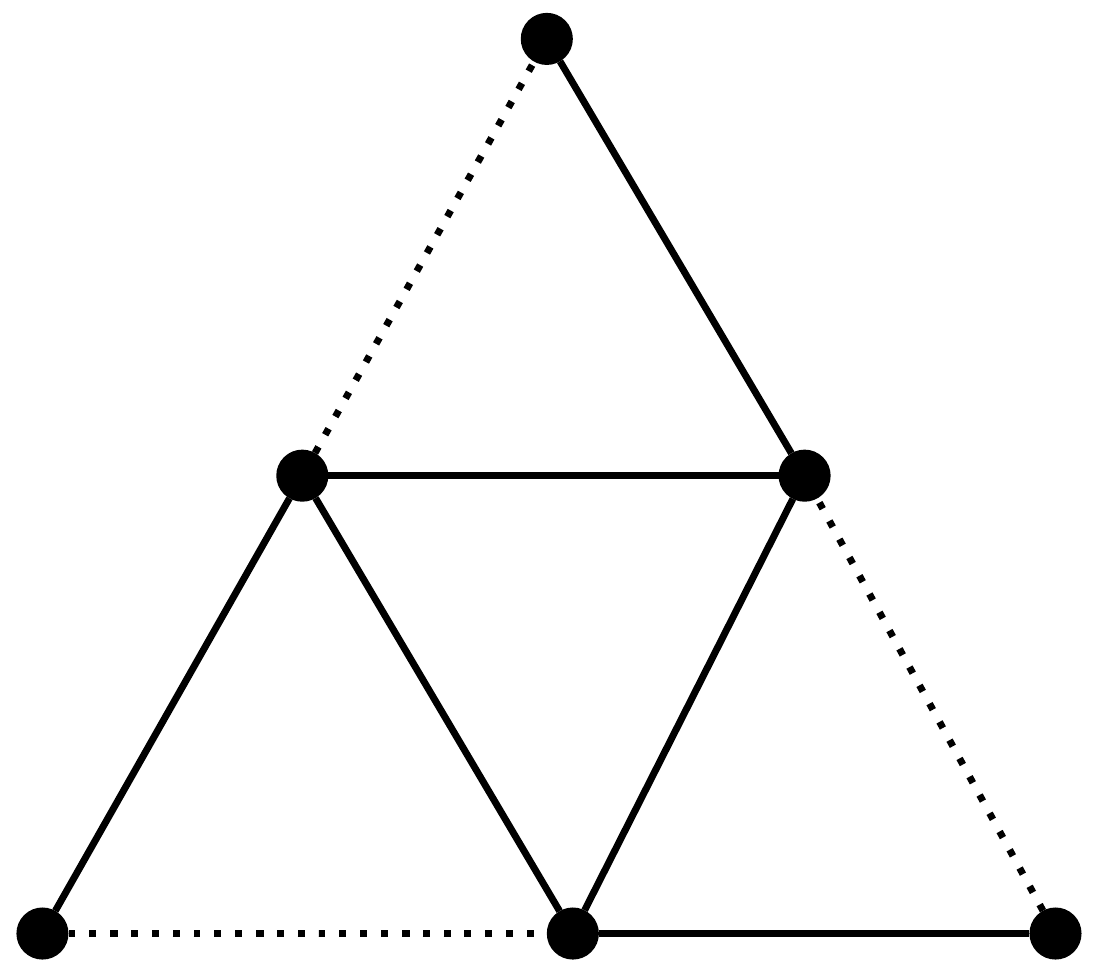}
\end{center}
	\caption{A $2$-edge-coloured $2$-tree with chromatic number $6$.}
	\label{fig:2tree}
\end{figure}

\section{Conclusion}
A common technique to find an upper bound for the chromatic number of a family of $(m,n)$-mixed graphs $\mathcal{F}$ is to a find an  $(m,n)$-mixed graph $H$ such that $F \to H$ for all $F \in \mathcal{F}$.
Such a target $(m,n)$-mixed graph is called \emph{universal} for the class $\mathcal{F}$.
Theorems \ref{thm:planarTheorem} and \ref{thm:TreeTheorem} suggest that one may restrict the search for universal targets to the class of simple cliques for $\mn$-mixed planar graphs and $\mn$-mixed $k$-trees.
In \cite{NR00} the authors construct $H^{\mn}_k$, a universal target for the class of $(m,n)$-mixed graphs  whose underlying graphs have acyclic chromatic number at most $k$, for fixed $(m,n)$ and $k$.
Since planar graphs have acyclic chromatic number at most $5$ \cite{Bo79}, and so $H^{\mn}_5$ is a universal target for the class of $(m,n)$-mixed planar graphs.
If $H^{\mn}_k$ were not a simple clique, then Theorem \ref{thm:planarTheorem} together with a simple colouring of $H^{\mn}_k$ could be used to improve the upper bound for the chromatic number of $(m,n)$-mixed planar graphs using Lemma \ref{lem:ComposeInject}.
In particular, one could improve the long-standing upper bound of $80$ for the oriented chromatic number of planar graphs \cite{RASO94}.
However, one can verify by computer that $H^{(1,0)}_5$ is indeed a simple oriented clique.

\section*{Acknowledgements}
The authors thank an anonymous reviewer for their observations leading to Theorem \ref{thm:planartree} and the structure of Section \ref{sec:minimum}.
\bibliographystyle{abbrv}
\bibliography{references.bib}

\end{document}